\documentclass[11pt]{article}

\usepackage[T1]{fontenc}
\usepackage{lmodern}
\usepackage[letterpaper,margin=1.05in]{geometry}
\usepackage{microtype}
\usepackage{amsmath,amssymb,amsthm,mathtools}
\usepackage{booktabs}
\usepackage{array}
\usepackage{enumitem}
\usepackage{xcolor}
\usepackage{natbib}
\usepackage[hidelinks,pdfusetitle]{hyperref}
\usepackage[nameinlink,noabbrev]{cleveref}

\setlist{itemsep=0.25em,topsep=0.4em}
\allowdisplaybreaks

\newtheorem{theorem}{Theorem}[section]
\newtheorem{proposition}[theorem]{Proposition}
\newtheorem{lemma}[theorem]{Lemma}
\newtheorem{corollary}[theorem]{Corollary}
\theoremstyle{definition}

\newtheorem{example}[theorem]{Example}
\theoremstyle{remark}
\newtheorem{remark}[theorem]{Remark}

\newcommand{\R}{\mathbb{R}}
\newcommand{\eps}{\epsilon}
\newcommand{\pos}[1]{\left(#1\right)^{+}}
\newcommand{\negp}[1]{\left(#1\right)^{-}}
\newcommand{\bid}{\mathrm{bid}}
\newcommand{\ask}{\mathrm{ask}}
\newcommand{\cW}{\mathcal{W}}
\newcommand{\cQ}{\mathcal{Q}}
\newcommand{\cD}{\mathsf{D}}

\newcommand{\CVB}{\mathrm{CVB}}

\newcommand{\supp}{\operatorname{supp}}

\newcommand{\ind}{\mathbf{1}}

\newcolumntype{L}[1]{>{\raggedright\arraybackslash}p{#1}}

\crefname{theorem}{theorem}{theorems}
\crefname{proposition}{proposition}{propositions}
\crefname{lemma}{lemma}{lemmas}
\crefname{corollary}{corollary}{corollaries}
\crefname{definition}{definition}{definitions}
\crefname{example}{example}{examples}
\crefname{remark}{remark}{remarks}

\title{\textbf{Multi-maturity consistency of option prices under bounded
bid--ask spreads: a minimal obstruction and an exact two-date basket
operator}}
\author{Minhyeok Lee
\\[0.25em]
{\small Independent Researcher, Republic of Korea}
\\[-0.05em]
{\small\texttt{borrownotime@gmail.com}}
}
\date{30 July 2026}

\hypersetup{
  pdfauthor={Minhyeok Lee},
  pdfsubject={Multi-maturity option-price consistency under bounded bid--ask spreads},
  pdfkeywords={option-price consistency, model-independent arbitrage,
    bid--ask spreads, transaction costs, robust superhedging, convex order}
}

\begin{document}

\maketitle

\begin{abstract}
Gerhold and G\"ul\"um derived necessary calendar--vertical--basket
conditions for finite call bid--ask quotes when the cash-settlement reference
price lies inside a dynamically traded stock spread of bounded absolute
width.  We first distinguish the printed bid of a calendar--vertical basket
from the executable bid dictated by the contract and the self-financing
convention.  After making this correction and adjoining the initial-spread
and complete one-maturity conditions, we show that the resulting system is
still insufficient.  Thus, under the corrected executable reading, the
sufficiency direction of Conjecture~5.4 of Gerhold and G\"ul\"um has a
negative answer even after the natural base conditions are imposed; its
separate weak-arbitrage clause is not addressed.  For every positive spread
bound, an explicit panel with two quoted dates and one actual call at each
date satisfies all corrected conditions, strictly whenever a strict face
applies, but admits a pathwise stock-flip arbitrage.  Measured by quoted dates
and actual calls, this obstruction is minimal within the finite-call
framework of Gerhold and G\"ul\"um, and the counterexamples contain a
full-dimensional open quote box.  A separate extremal-envelope argument
produces the same separation gap.

We then eliminate the arbitrary adapted stock holding in the complete
two-date pathwise problem.  The result is a closed-form one-step operator,
expressible either as a finite concave maximization or as a convex-envelope
infimum over a \(2\eps\)-neighborhood.  As the finite call positions vary,
the operator characterizes the full two-date executable
model-independent-arbitrage basket cone.  General robust superhedging
duality and backward principles are prior work; the contribution here is the
explicit calculation for this bounded-spread reference/shadow geometry.
\end{abstract}

\noindent\textbf{Keywords.}
consistency of option prices; model-independent arbitrage; option bid--ask
spreads; transaction costs; robust superhedging; convex order

\medskip
\noindent\textbf{2020 Mathematics Subject Classification.}
Primary 91G20; secondary 60E15.

\section{Introduction}
\label{sec:introduction}

Finite option quotes encode a geometric no-arbitrage problem.  In a
frictionless one-period market, monotonicity, convexity, and the slope bounds
of the call-price curve lead to explicit consistency conditions
\citep{CarrMadan2005,DavisHobson2007,Cousot2007}.  With several maturities,
convex order supplies the intertemporal restriction.  Bid--ask spreads make
both parts subtler: long and short static positions execute at different
prices, and dynamic stock trades cross a spread whose orientation is observed
only when the trade is made.

\citet{GerholdGulum2020} study a particularly sharp version of this problem.
Calls are cash settled against a nontraded reference price \(S^C\), while a
shadow martingale \(S^*\) must remain inside the bid--ask spread of the
dynamically traded stock.  The discounted spread has a deterministic absolute
width bound \(\eps\).  Their one-maturity conditions are complete.  For several
maturities they introduce \emph{calendar--vertical baskets} (CVBs) and derive
four families of necessary inequalities.  Conjecture~5.4 of
\citet{GerholdGulum2020} has two parts: it asks whether those four families
are also sufficient for consistency, and it predicts weak arbitrage when the
first three hold but the strict fourth condition fails.

A sign inconsistency in the cited CVB bid must first be resolved.  The
contract contains a cash term \(-2\eps\ind_{\{\sigma_1=-1\}}\), while the
displayed bid in
\citet[Definition~5.1 and equation~(22)]{GerholdGulum2020} adds
\(2\eps\) on that face.  The contract, the paper's self-financing convention,
and the liquidation calculation in its Lemma~5.2 instead give an executable
bid with a minus sign.  The distinction is substantive: the printed bid can
violate a claimed necessary inequality in a constant, perfectly consistent
model.  We therefore study the corrected executable question: after replacing
the sign and adjoining the initial and complete one-maturity requirements,
are the resulting conditions sufficient?

The answer is no at the smallest possible multi-date quote complexity.  The
economic obstruction is a \emph{stock flip}.  Shorting an earlier call and
buying a fractional later call allows a trader to hold no stock after a low
first-date state and a short stock position after a high state.  A single CVB
comparison does not encode this coupled switch.

\begin{theorem}[Main result]
\label{thm:main}
For every \(\eps>0\), there is a two-date panel with one actual call quote at
each date such that:
\begin{enumerate}[label=\textup{(\roman*)}]
    \item the initial spread and the complete one-maturity consistency
    conditions hold;
    \item all corrected executable CVB conditions hold, strictly whenever
    the source requires a strict face;
    \item the strategy consisting of \(2/3\) of a share, one short
    date-one call, and \(1/3\) of a long date-two call has setup cost
    \[
        -\frac{47}{150}\eps
    \]
    and nonnegative liquidation on every admissible path;
    \item this counterexample persists on a six-dimensional open box of
    quote panels.
\end{enumerate}
Two quoted dates and two actual calls, necessarily split one per date, are
minimal for such an augmented intertemporal counterexample.

Moreover, for every payoff \(h_2\) generated by a finite terminal call
portfolio, every initial stock holding \(\theta\), and \(x\ge\eps\), the
two-date continuation
operator is the following.  Extend \(h_2\) by \(+\infty\) on
\((-\infty,\eps)\) before taking the biconjugate \(h_2^{**}\); then
\[
\begin{aligned}
(\cW_{\eps,\theta}h_2)(x)
={}&\theta x-\eps|\theta|\\
&+\inf_{\substack{z\ge\eps\\ |z-x|\le2\eps}}
\left\{h_2^{**}(z)+\pos{\theta(z-x)}\right\}.
\end{aligned}
\]
Together with the date-one static payoff, this operator characterizes the
full two-date finite-call model-independent-arbitrage basket cone.
\end{theorem}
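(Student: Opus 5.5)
The main theorem bundles several separate claims, so the proof will be assembled from them in turn: the explicit panel, the arbitrage certificate, openness, minimality, and the operator formula with its cone characterization.

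\textbf{Counterexample (i)--(iii).} I would fix \(\eps>0\) and choose the panel directly. Stock bid/ask at time zero are \(s_0\pm\eps/2\). There is a date-one call with strike \(K_1\) and a date-two call with strike \(K_2\). All quotes are affine in \(\eps\), so after scaling every condition becomes a finite system of linear inequalities with rational coefficients.

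Item (i) is a direct check of the initial-spread inequalities and the complete one-maturity conditions of Gerhold and G\"ul\"um at each date separately. With a single strike per date, these reduce to a handful of bounds on each call price relative to the stock spread.

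For (ii), I would enumerate every corrected executable CVB built from the two actual calls, together with the implicit zero-strike stock calls. For each one, I would verify the inequality with the minus sign on the \(\sigma_1=-1\) face, and verify it strictly wherever the source requires strictness.

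For (iii), I would write the setup cost of \(\tfrac23\) share, \(-1\) date-one call and \(+\tfrac13\) date-two call at the executable prices and check that it equals \(-\tfrac{47}{150}\eps\). Nonnegative liquidation then follows from the stock-flip argument, with the date-one state split at the strike \(K_1\):
\begin{enumerate}[label=\textup{(\alph*)}]
    \item After a low state, the trader sells the \(\tfrac23\) share at the bid and holds nothing to date two.
    \item After a high state, the trader sells more stock, ending short against the long \(\tfrac13\) call. The terminal payoff is then piecewise linear in \(S^C_2\) with the reference price constrained to lie within \(\eps\) of the shadow price.
\end{enumerate}
Each case reduces to checking a piecewise-linear function at finitely many kinks, using \(|S^C-S^*|\le\eps\).

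\textbf{Openness (iv).} All the inequalities in (i)--(ii), and the negativity of the setup cost in (iii), are strict and continuous in the six quote entries. Some inequalities in (i)--(ii) may be tight by design; for those I would shift the base panel by a small interior amount first. Once every inequality is strict, a small open box of quote panels inherits all of them.

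\textbf{Minimality.} This is the combinatorial part. With a single quoted date, the one-maturity conditions are already complete. With two dates but actual calls at only one of them, the other date carries only stock quotes. I would show that in that case the corrected CVB conditions and the initial-spread conditions suffice, by building a consistent model: a constant shadow price on the empty date, combined with the one-maturity model. Hence at least one call per date is needed.

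\textbf{Operator and cone.} The continuation operator \(\cW_{\eps,\theta}h_2\) is the minimal date-one capital needed to superhedge \(h_2\) at date two. It is computed in three steps:
\begin{enumerate}[label=\textup{(\alph*)}]
    \item Fix the date-one shadow price and a new stock holding \(\theta'\). Rebalancing from \(\theta\) to \(\theta'\) costs \(\theta' x-\theta x\) plus a spread term.
    \item Take the infimum over \(\theta'\) of the one-period superhedge of \(h_2\). By conjugate duality this yields the convex envelope \(h_2^{**}\). Extending \(h_2\) by \(+\infty\) below \(\eps\) encodes positivity of prices.
    \item Range over shadow prices inside the spread. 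The admissible date-two reference points \(z\) then satisfy \(|z-x|\le 2\eps\), and optimizing the orientation of the crossed spread produces the term \(\pos{\theta(z-x)}-\eps|\theta|\).
\end{enumerate}
I would prove the two inequalities separately. For the upper bound, I would exhibit an explicit strategy attaining the formula. For the lower bound, I would use an explicit adversarial two-point path for each \(z\), which is legitimate because the infimum is attained at finitely many kinks of the piecewise-linear \(h_2^{**}\).

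Plugging the operator into the date-one problem, a basket is a model-independent arbitrage exactly when the date-one static payoff plus \(\cW_{\eps,\theta}h_2\) is nonnegative over the admissible \(x\), with negative cost. This characterizes the cone.

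I expect the main obstacle to be the lower bound of the operator formula. One has to show that no adapted holding beats the formula, while the sign and orientation of the spread are chosen adversarially at the time of the trade.
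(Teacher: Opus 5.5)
Your proposal has genuine gaps, and they sit in the finite part, which is where most of this theorem's content lies.

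\textbf{The panel is never constructed.} You fix no strikes and no call quotes, so nothing in (i)--(iii) is actually verified, and the value $-\tfrac{47}{150}\eps$ cannot come out. Even the strikes are constrained: with the weights $\tfrac23,\tfrac13$ prescribed, the high-state branch of the flip closes only if $\tfrac13(K_2+\eps)\le K_1-\eps$. This is the identity $\gamma(K+\eps)=a-\eps$ of \Cref{prop:sfb}.

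\textbf{Your one concrete choice breaks (iv).} An initial spread $s_0\pm\eps/2$ has width exactly $\eps$, which lies on the boundary of the base system. Every open box around such a panel contains panels of initial width larger than $\eps$, so (iv) fails there. Shifting afterwards changes the very panel whose cost (iii) fixes. The paper instead uses width $\eps/50$: stock $[\tfrac{599}{100}\eps,\tfrac{601}{100}\eps]$, and calls at strikes $3\eps$ and $5\eps$ quoted at $[\tfrac{499}{100}\eps,\tfrac{501}{100}\eps]$ and $[\tfrac{199}{100}\eps,\tfrac{201}{100}\eps]$. It gives explicit one-maturity witnesses and the single certificate $U(x)=\pos{8\eps-x}$ in \Cref{lem:call-geometry}. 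That certificate gives every CVB face slack at least $\tfrac{51}{50}\eps$, which turns (iv) into a quantitative perturbation bound.

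\textbf{The auxiliary node is misidentified.} It is not a zero-strike stock call. It has strike $\eps$, bid $S_0^{\bid}-2\eps$ and ask $S_0^{\ask}$, as in \eqref{eq:auxiliary-quote}. A zero strike would test different CVB points and ask nodes.

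\textbf{The minimality step fails in one case.} Suppose the calls sit at date one and you take a deterministic shadow price $\delta_m$ at date two. Then $\nu_1\le_c\delta_m$ forces $\nu_1=\delta_m$ by Jensen, which the date-one quote generally excludes. Instead, copy the call date's laws $(\mu,\nu)$ to the other date with the identity coupling. Treat the no-call case with $\nu_1=\nu_2=\delta_m$ and $\mu_t=\delta_{\max\{m,\eps\}}$.

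\textbf{The operator sketch mixes primal and dual objects.} $\cW_{\eps,\theta}h_2$ is the largest value guaranteed by $\theta$ shares plus $h_2$. It is a subhedging value, hence the lower envelope, not a superhedging capital. Also, $z$ is not a date-two reference point, since those range over all of $[\eps,\infty)$. It is the mean of the terminal reference law, equivalently the dual variable of the penalty $\varphi_\theta(p)=\eps(|p|+|p+\theta|)$.

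\textbf{Your adversary works if read as a two-point model.} Take a deterministic date-one spread containing $x$ and a shadow value $s_1^*=\max\{x,z\}-\eps$ for $\theta\ge0$, or $\min\{x,z\}+\eps$ for $\theta<0$. For the shadow martingale, shift a terminal law of mean $z$ by $s_1^*-z$. This bounds the expected liquidation of every rule $q$ simultaneously by $c+h_1(x)+\theta s_1^*+E\,h_2(Y)$. By weak duality, no rule beats the formula, so this is a genuine alternative to the paper's compact truncation plus Sion exchange.

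However, $h_2^{**}$ on the half-line need not be attained by finitely supported laws. For $h_2=-\pos{y-K}$, the value $h_2^{**}(z)=-(z-\eps)$ is approached only as mass escapes to infinity. So you need approximation, not attainment at kinks; approximation is also needed at the zero-bid boundary.

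\textbf{The achieving rule is never identified.} You plan to exhibit a strategy attaining the formula but do not say which rule $q(x)$ does it. Supplying it requires three things:
\begin{itemize}
\item the variational form \eqref{eq:variational-operator};
\item the no-gap conjugate identity equating it with the envelope formula, valid because $\varphi_\theta$ is finite and continuous;
\item the finite candidate set of \Cref{prop:finite-evaluation}, which yields a Borel selector.
\end{itemize}
Without these, the cone characterization is not established.
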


\noindent
The finite-panel assertions and minimality are proved in
\Cref{thm:minimal-panel}, the open-box statement in \Cref{cor:open-box}, and
the operator assertions in \Cref{thm:operator} and
\Cref{prop:convex-envelope}.

The counterexample is not isolated.  The six quote coordinates may be
perturbed independently while preserving every tested condition and the
negative setup cost.  Nor is its certificate tied to the trading strategy:
the extremal call-function envelopes from
\citet{GerholdGulum2019} force a convex-order gap of
\(47\eps/50\); after multiplication by the terminal call weight \(1/3\),
this equals the initial gain \(47\eps/150\).

\paragraph{Relation to prior work.}
Consistent price systems and
fundamental theorems under transaction costs have a long history
\citep{JouiniKallal1995,Schachermayer2004,PennanenPenner2010,
BouchardNutz2016,BayraktarZhang2016}.  Probability-free and robust
superhedging recursions under frictions appear in
\citet{DolinskySoner2014} and \citet{Burzoni2016}.
\citet{CheriditoKupperTangpi2017} allow state-dependent convex dynamic costs
and finite static portfolios in arbitrary derivatives, a framework that
contains the present execution architecture after a direct encoding.
Specializing that architecture to the bounded reference/shadow spread yields
the penalty \(-\eps(|q-\theta|+|q|)\), the scalar convex-envelope formula,
and a finite evaluation rule.
An explicit restricted-support convex-envelope recursion also appears in
Chapter~5 of \citet{Bauer2024}.

Adjacent finite-quote work uses different market assumptions.
\citet{CohenReisingerWang2020} treat static repair with a frictionless
underlying.  Computational martingale optimal transport fixes marginal laws
and relaxes martingale constraints for numerical purposes
\citep{GuoObloj2019}.  The current bid--ask martingale optimal transport
duality of \citet{LiangEtAl2026} allows vanilla bid--ask execution but keeps
dynamic trading in the underlying frictionless.  None of these formulations
contains simultaneously the absolute stock-spread bound, the distinct
reference and shadow prices, and the finite execution structure studied
here.

\paragraph{Organization.}
\Cref{sec:market} gives a self-contained discounted market and separates the
printed and executable CVB systems.  \Cref{sec:minimal} proves the stock-flip
bridge, the counterexample family for every \(\eps>0\), and its open
neighborhood.
\Cref{sec:envelope} gives the independent convex-order certificate.
\Cref{sec:operator} derives the two-date operator, including the
compact minimax reduction, conjugate representation, attainment, and Borel
selector.  \Cref{sec:limits} shows that adding every pairwise stock-flip
bridge still does not close the full basket cone, and records the
scope of the results.

\paragraph{AI-assisted research.}
This paper was developed with substantive assistance from generative AI.
The configurations used and their roles are described in
\Cref{sec:aiuse}.

\section{Market, execution, and the two CVB systems}
\label{sec:market}

We work in discounted units on dates \(0,1,2\); the bank account is one.
Fix \(\eps>0\).  At date \(t=1,2\), an admissible stock state is
\[
    (\underline s_t,x_t,\overline s_t)
\]
with
\begin{equation}
0<\underline s_t\le x_t\le\overline s_t,\qquad
\overline s_t-\underline s_t\le\eps,\qquad x_t\ge\eps.
\label{eq:admissible-state}
\end{equation}
Here \(x_t\) is the cash-settlement reference price.  Calls with strike
\(k_{t,i}>\eps\) pay \(\pos{x_t-k_{t,i}}\) and are quoted at
\[
    [\,\underline r_{t,i},\overline r_{t,i}\,],
    \qquad 1\le i\le N_t.
\]
Assume \(0<S_0^{\bid}\le S_0^{\ask}\).  The date-zero stock quote is
\([S_0^{\bid},S_0^{\ask}]\).

\subsection{Consistency and one-maturity witnesses}

We recall only the model objects needed below.  For a probability measure
\(\mu\) with finite first moment, write
\[
    R_\mu(k)=\int_{\R}(y-k)^+\,\mu(dy)
\]
for its call function, and let \(W^\infty\) denote the infinity-Wasserstein
distance.  A family \((\nu_t)\) is increasing in convex order, written
\(\nu_1\le_c\nu_2\), if both measures have the same mean and
\(\int f\,d\nu_1\le\int f\,d\nu_2\) for every convex \(f\) for which the
integrals exist.

The finite-support reduction in \citet[Lemma~2.7]{GerholdGulum2020} says that
a panel is \(\eps\)-consistent precisely when the initial stock width is at
most \(\eps\) and there are finitely supported reference laws \(\mu_t\) and
shadow laws \(\nu_t\) such that
\begin{equation}
\begin{gathered}
    \supp\mu_t\subseteq[\eps,\infty),\qquad
    R_{\mu_t}(k_{t,i})
       \in[\underline r_{t,i},\overline r_{t,i}],\\
    W^\infty(\mu_t,\nu_t)\le\eps,\qquad
    \nu_1\le_c\nu_2,\qquad
    E\nu_2\in[S_0^{\bid},S_0^{\ask}].
\end{gathered}
\label{eq:measure-consistency}
\end{equation}
For a single quoted maturity, the corresponding condition is obtained by
dropping convex order and requiring the mean of its shadow law to lie in the
initial interval.  We call the initial-width requirement together with this
complete one-maturity requirement at every quoted date the
\emph{base system}.

\subsection{Executable semistatic portfolios}

Let \(c\in\R\) be the initial bank holding, \(\theta\in\R\) the number of
shares held from date zero to date one, and
\(\eta_{t,i}\in\R\) the static call positions.  Their date-zero executable
cost is
\begin{equation}
\begin{aligned}
\operatorname{Cost}(c,\theta,\eta)
={}&c+\theta^+S_0^{\ask}-\theta^-S_0^{\bid}\\
&+\sum_{t=1}^2\sum_{i=1}^{N_t}
\left(\eta_{t,i}^+\overline r_{t,i}
      -\eta_{t,i}^-\underline r_{t,i}\right).
\end{aligned}
\label{eq:setup-cost}
\end{equation}
After observing the complete date-one state, the investor may rebalance to a
Borel-measurable holding \(q\).  The rebalance is made at the date-one ask
when \(q-\theta>0\) and at the bid when \(q-\theta<0\).  The remaining stock
is liquidated at date two.  A \emph{model-independent arbitrage} is a
self-financing semistatic portfolio with negative setup cost and
nonnegative final liquidation for every pair of states satisfying
\eqref{eq:admissible-state}.  This is stronger than a weak or
model-dependent arbitrage because the same strategy must work on every
admissible path, independently of any probability model.

\subsection{Printed and executable CVB bids}

The sign issue can be seen directly from the contract.  In the notation of
\citet[Definition~5.1]{GerholdGulum2020}, a CVB of maturity \(u<T\) has cash
settlement
\begin{equation}
\CVB_u
=C_1(k_{1,j_1})
+\sum_{v=2}^{u}
\bigl(C_v(k_{v,j_v})-C_v(k_{v,i_v})\bigr)
-2\eps\ind_{\{\sigma_1=-1\}}.
\label{eq:cvb-contract}
\end{equation}
The indices and signs obey the path-matching restrictions in that
definition; their details are immaterial for the accounting identity.

\begin{proposition}[Printed versus executable bid]
\label{prop:sign}
The amount received by selling \(\CVB_u\), before following the dynamic
liquidation rule, is
\begin{equation}
\underline r_u^{\CVB,\mathrm{exec}}
=\underline r_{1,j_1}
+\sum_{v=2}^{u}
\bigl(\underline r_{v,j_v}-\overline r_{v,i_v}\bigr)
-2\eps\ind_{\{\sigma_1=-1\}}.
\label{eq:executable-cvb-bid}
\end{equation}
The \(+2\eps\) adjustment in the printed CVB bid is inconsistent with
\eqref{eq:cvb-contract}, the executable option accounting, and the
self-financing identity.
\end{proposition}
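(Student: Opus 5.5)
The plan is to decompose the sale of \(\CVB_u\) into its constituent static legs and price each leg with the execution convention of \eqref{eq:setup-cost}. Selling one unit of \(\CVB_u\) means taking the negative of the payoff \eqref{eq:cvb-contract}. So the seller is short one call \(C_1(k_{1,j_1})\), short \(C_v(k_{v,j_v})\) and long \(C_v(k_{v,i_v})\) for \(2\le v\le u\), and receives the deterministic cash \(+2\eps\ind_{\{\sigma_1=-1\}}\) at settlement.

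Next I will apply the cost formula \eqref{eq:setup-cost} leg by leg, taking \(\eta=-1\) for the short legs and \(\eta=+1\) for the long legs. A short call contributes \(-\underline r\) to the cost, i.e.\ proceeds \(\underline r\). A long call contributes cost \(\overline r\), i.e.\ proceeds \(-\overline r\). Summing these gives the option part of \eqref{eq:executable-cvb-bid}.

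For the cash leg, I will use the self-financing identity: the bank account is one in discounted units. A contractual obligation of \(-2\eps\ind_{\{\sigma_1=-1\}}\) in the claim is therefore matched, at the price level, by the same signed cash amount. Concretely, the price at which the basket can be sold equals the executable value of a portfolio whose payoff replicates \(\CVB_u\) leg by leg. A deterministic payoff \(a\) enters that value as \(a\), so the term is \(-2\eps\ind_{\{\sigma_1=-1\}}\), not \(+2\eps\). I will cross-check this against the liquidation computation of Lemma~5.2 in \citet{GerholdGulum2020}. There the face \(\sigma_1=-1\) corresponds to the reference price sitting up to \(2\eps\) away from the executable stock price, which costs the seller, not benefits them. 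This confirms the sign of \eqref{eq:executable-cvb-bid}.

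To establish the inconsistency claim, I will exhibit the contradiction directly. Take a constant, perfectly consistent model: \(x_t\equiv s\) and a degenerate shadow, with all call quotes equal to their model prices. Then the printed bid exceeds the model value of the contract by \(4\eps\) on the \(\sigma_1=-1\) face. Selling at the printed bid and holding the replicating legs would then lock in a riskless gain. Equivalently, the claimed necessary inequality of \citet[equation~(22)]{GerholdGulum2020}, read with the printed bid, fails in this consistent model.

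The main obstacle is bookkeeping rather than mathematics. The path-matching restrictions on indices and on \(\sigma\) in Definition~5.1 must be shown not to affect the accounting identity, since each leg is priced independently of the others. The step where I would be most careful is confirming the orientation of \(\sigma_1\) so that the constant-model contradiction lands on the correct face.
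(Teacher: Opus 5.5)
This is correct and matches the paper's argument: selling reverses every static leg, shorts execute at the bid and longs at the ask, the deterministic cash term $-2\eps\ind_{\{\sigma_1=-1\}}$ enters at face value, and the result agrees with the initial-value identity in the source's Lemma~5.2. Your constant-model check is extra, corresponding to the paper's separate Example~\ref{ex:printed-sign}, but the ``equivalently, (22) fails'' step needs a concrete strike configuration such as an in-the-money call at the same strike on both dates, because the printed system does not fail in every constant model (with $\eps=1$, $s=5$, zero spreads and calls at strikes $6$ and $20$ quoted at $0$, it holds).
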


\begin{proof}
Selling \eqref{eq:cvb-contract} reverses every static position.  Its
date-zero cost, using option bids for shorts and asks for longs, is
\[
-\underline r_{1,j_1}
+\sum_{v=2}^{u}
\bigl(\overline r_{v,i_v}-\underline r_{v,j_v}\bigr)
+2\eps\ind_{\{\sigma_1=-1\}}.
\]
This is the negative of \eqref{eq:executable-cvb-bid}.  It is also the
initial-value identity asserted in the source's Lemma~5.2.  The dynamic
stock calculation in that proof is consistent with this expression.
\end{proof}

The distinction changes necessity, not merely notation.

\begin{example}[A consistent constant model violating the printed necessary
condition]
\label{ex:printed-sign}
Take \(\eps=1\), a constant reference and shadow price equal to \(5\), zero
stock spread, and a strike-\(2\) call worth \(3\) at both dates.  This is a
consistent model.  For the face \(\sigma_1=-1\), however, the printed formula
assigns the earlier CVB the bid \(3+2=5\) at shifted strike \(2+1=3\).
The later shifted ask at the same point is \(3\), so the printed monotonicity
condition would require \(5\le3\).  The executable bid is instead
\(3-2=1\), as required by the contract.
\end{example}

\begin{remark}[Source versions]
\label{rem:source-versions}
The same sign pattern appears in
\citet[Definition~3.10, equation~(3.20), and
Proposition~3.11]{GulumThesis2016}, in arXiv:1608.05585v1 and v2, and in the
published article of \citet{GerholdGulum2020}.  In each case the contract and
self-financing calculation use \(-2\eps\), while the displayed bid uses
\(+2\eps\).  This is an algebraic comparison of the formulas, not an
attribution of authorial intent.
\end{remark}

Accordingly, the result below has two separate implications.  The literally
printed system is not necessary.  After replacing its bid by
\eqref{eq:executable-cvb-bid}, necessity survives, but even the
base-augmented executable system is not sufficient.

\subsection{The complete executable two-date CVB geometry}
\label{subsec:t2-cvb}

For a two-date panel, introduce the source's auxiliary quote at each date:
\begin{equation}
k_{t,0}=\eps,\qquad
\underline r_{t,0}=S_0^{\bid}-2\eps,\qquad
\overline r_{t,0}=S_0^{\ask}.
\label{eq:auxiliary-quote}
\end{equation}
Every CVB then has maturity one and is indexed by
\(j\in\{0,\ldots,N_1\}\) and \(\sigma\in\{-1,1\}\).  Define its shifted
strike and executable bid by
\begin{equation}
z_{j,\sigma}=k_{1,j}-\eps\sigma,\qquad
b_{j,\sigma}=\underline r_{1,j}
-2\eps\ind_{\{\sigma=-1\}}.
\label{eq:t2-cvb-point}
\end{equation}
The shifted terminal ask nodes are
\begin{equation}
q_i=k_{2,i}+\eps,\qquad a_i=\overline r_{2,i},
\qquad 0\le i\le N_2.
\label{eq:terminal-ask-nodes}
\end{equation}

In this notation, the four necessary conditions of
\citet[Theorem~5.3]{GerholdGulum2020}, with the executable correction, are
equivalent to the following call-function geometry for every CVB point
\((z,b)=(z_{j,\sigma},b_{j,\sigma})\):
\begin{align}
b&\le
\frac{(q_\ell-z)a_i+(z-q_i)a_\ell}{q_\ell-q_i},
&&q_i<z<q_\ell,
\label{eq:cvb-chord}\\
b&\le a_\ell+q_\ell-z,
&&z<q_\ell,
\label{eq:cvb-slope}\\
b&\le a_i,
&&z\ge q_i,
\label{eq:cvb-monotone}\\
b=a_i&\ \Longrightarrow\ a_i=0,
&&z>q_i.
\label{eq:cvb-strict}
\end{align}
The first inequality is convexity, the second is the lower slope bound
\(-1\), the third is monotonicity, and the fourth is the source's strict
face.  We call the base system together with
\eqref{eq:cvb-chord}--\eqref{eq:cvb-strict} for every admissible index the
\emph{base-augmented executable CVB system}.  This is the precise system
tested in the sequel.

\begin{lemma}[Call-geometry certificate]
\label{lem:call-geometry}
Suppose there is a nonnegative, nonincreasing, convex function \(U\) on
\(\R\) whose slopes are bounded below by \(-1\), such that
\[
    b_{j,\sigma}<U(z_{j,\sigma})
    \quad\text{for every CVB point}
\]
and
\[
    U(q_i)\le a_i
    \quad\text{for every terminal ask node}.
\]
Then \eqref{eq:cvb-chord}--\eqref{eq:cvb-strict} hold, with strictness on
every applicable face for which \(a_i>0\).
\end{lemma}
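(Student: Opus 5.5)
The plan is to sandwich every CVB bid and terminal ask through the single function \(U\). For each CVB point \((z,b)\) we have \(b<U(z)\), and at each terminal ask node we have \(U(q_i)\le a_i\). Each of \eqref{eq:cvb-chord}--\eqref{eq:cvb-monotone} then follows by evaluating one property of \(U\) at \(z\) and comparing with the asks. Throughout, \(z\) ranges over the shifted strikes \(z_{j,\sigma}\), including those from the auxiliary index \(j=0\), and the nodes include \(i=0\). No property of these points beyond the two hypotheses is used, so the argument is uniform over admissible indices.

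First, the chord inequality \eqref{eq:cvb-chord}. For \(q_i<z<q_\ell\), convexity of \(U\) gives
\[
U(z)\le\frac{(q_\ell-z)U(q_i)+(z-q_i)U(q_\ell)}{q_\ell-q_i}.
\]
The weights are positive, so we may replace \(U(q_i),U(q_\ell)\) by \(a_i,a_\ell\). Combining with \(b<U(z)\) yields \eqref{eq:cvb-chord} with strict inequality.

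Second, the slope bound \eqref{eq:cvb-slope}. For \(z<q_\ell\), the slope of \(U\) on \([z,q_\ell]\) is at least \(-1\), so \(U(z)\le U(q_\ell)+(q_\ell-z)\le a_\ell+q_\ell-z\). Hence \(b<a_\ell+q_\ell-z\). To make the slope step rigorous, I would note that a convex function has one-sided derivatives, and that the bound on slopes means every difference quotient is \(\ge -1\).

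Third, the monotonicity bound \eqref{eq:cvb-monotone}. For \(z\ge q_i\), since \(U\) is nonincreasing, \(U(z)\le U(q_i)\le a_i\), so \(b<a_i\).

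Fourth, the strict face \eqref{eq:cvb-strict} is immediate from the third step. Whenever \(z>q_i\), we have already shown \(b<a_i\), so the antecedent \(b=a_i\) never occurs and the implication holds vacuously. In particular, strictness on every applicable face holds, and a fortiori on the faces with \(a_i>0\). Nonnegativity of \(U\) is not needed for these inequalities. I would remark that it only guarantees that the hypothesis \(U(q_i)\le a_i\) is compatible with nonnegative asks, and that it is what makes the condition \(a_i=0\) the only possible degenerate case when the \(U\)-inequalities are relaxed to nonstrict ones.

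The only step requiring care is the slope step, specifically the chord-slope comparison for a convex function with slopes bounded below. I would handle it by the standard monotonicity of difference quotients: \((U(q_\ell)-U(z))/(q_\ell-z)\ge U'_+(z)\ge -1\). Everything else is a one-line substitution.
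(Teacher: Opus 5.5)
Your proof is correct and is essentially the paper's argument: convexity with $b<U(z)$ for the chord face, the slope bound $U(z)\le U(q_\ell)+q_\ell-z$ for the slope face, monotonicity for the monotone face, and $b<U(z)\le U(q_i)\le a_i$ to exclude equality on the strict face (your observation that this does not even need $a_i>0$ is right). Only your closing aside is off: in the relaxed nonstrict setting, equality forces $U\equiv a_i$ on $[q_i,\infty)$, so $a_i=0$ would follow from $U$ vanishing at infinity (as a call function does), not from nonnegativity, which permits e.g.\ $U\equiv1$.
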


\begin{proof}
Convexity places \(U(z)\) below the chord between any two nodes surrounding
\(z\); the terminal asks lie above those node values and the CVB bid lies
below \(U(z)\), proving \eqref{eq:cvb-chord}.  The slope lower bound gives
\(U(z)\le U(q_\ell)+q_\ell-z\) when \(z<q_\ell\), which proves
\eqref{eq:cvb-slope}.  Monotonicity gives
\(U(z)\le U(q_i)\) when \(z\ge q_i\), proving
\eqref{eq:cvb-monotone}.  If \(z>q_i\), the assumed strict inequality at the
CVB point rules out equality with a positive terminal ask, which gives
\eqref{eq:cvb-strict}.
\end{proof}

\section{The minimal stock-flip obstruction}
\label{sec:minimal}

The missing inequality is most transparent before any numerical panel is
introduced.

\begin{proposition}[Stock-flip bridge]
\label{prop:sfb}
Let \(s<t\), and suppose an earlier call at strike \(a>\eps\) and a later
call at strike \(K>\eps\) are traded.  If
\[
    a<K+2\eps,
\]
define
\begin{equation}
\gamma=\frac{a-\eps}{K+\eps},
\qquad
\theta=1-\gamma
=\frac{K-a+2\eps}{K+\eps}.
\label{eq:sfb-coefficients}
\end{equation}
Every panel without model-independent arbitrage satisfies
\begin{equation}
\boxed{\;
\theta S_0^{\ask}
-\underline r_s(a)
+\gamma\overline r_t(K)\ge0.
\;}
\label{eq:sfb}
\end{equation}
Equivalently,
\begin{equation}
\boxed{\;
(K-a+2\eps)S_0^{\ask}
-(K+\eps)\underline r_s(a)
+(a-\eps)\overline r_t(K)\ge0.
\;}
\label{eq:sfb-homogeneous}
\end{equation}
\end{proposition}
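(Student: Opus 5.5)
The plan is to exhibit one explicit semistatic portfolio whose date-zero cost is exactly the left side of \eqref{eq:sfb}, and whose final liquidation is nonnegative on every admissible path. Then, if \eqref{eq:sfb} failed, that portfolio (with zero bank holding \(c=0\)) would be a model-independent arbitrage. The equivalence with \eqref{eq:sfb-homogeneous} is then just multiplication by \(K+\eps>0\) and substitution of \eqref{eq:sfb-coefficients}.

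\emph{The portfolio.} At date zero I buy \(\theta\) shares, sell one call of maturity \(s\) and strike \(a\), and buy \(\gamma\) calls of maturity \(t\) and strike \(K\). The hypotheses matter only through signs. Since \(a>\eps\), we have \(\gamma>0\), and since \(a<K+2\eps\), we have \(\theta>0\). Hence, by \eqref{eq:setup-cost}, the cost is precisely \(\theta S_0^{\ask}-\underline r_s(a)+\gamma\overline r_t(K)\). At dates other than \(s\) and \(t\) the stock holding is left unchanged; in the two-date setting \(s=1\), \(t=2\).

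At date \(s\) I use the threshold rule described in the introduction as the stock flip. If \(x_s\le a\), sell all \(\theta\) shares and hold no stock. If \(x_s>a\), sell \(\theta+\gamma=1\) share, so that the holding becomes \(-\gamma\). This rule is Borel measurable. The only facts about admissible states I need, from \eqref{eq:admissible-state}, are
\[
\underline s\ge\overline s-\eps\ge x-\eps,\qquad \overline s\le x+\eps,\qquad x\ge\eps .
\]

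\emph{Verification.} I will check the two cases separately.

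In the low case (\(x_s\le a\)), the short call pays nothing. The stock sale yields at least \(\theta(x_s-\eps)\ge0\), and the long later calls are nonnegative, so liquidation is nonnegative.

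In the high case (\(x_s>a\)), the date-\(s\) cash flows are at least \((x_s-\eps)-(x_s-a)=a-\eps\). At date \(t\), buying back \(\gamma\) shares costs at most \(\gamma(x_t+\eps)\), and the calls pay \(\gamma(x_t-K)^+\). The total is therefore at least
\[
a-\eps-\gamma(x_t+\eps)+\gamma(x_t-K)^+ .
\]
This quantity is nonincreasing in \(x_t\) on \([\eps,K]\) and constant for \(x_t\ge K\). So it is bounded below by its value at \(x_t=K\), namely \(a-\eps-\gamma(K+\eps)\), which equals \(0\) by the definition of \(\gamma\).

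\emph{Choice of the switching point.} The main point requiring care is this threshold. Switching at any \(x_s<a\) would leave the high branch with \(x_s-a<0\). Switching above \(a\) would require \(\theta(x_s-\eps)\ge x_s-a\) in the low branch, and this fails for large \(x_s\) because \(\theta<1\). The threshold \(x_s=a\) makes both branches nonnegative, and \(\gamma\) is exactly the weight that makes the worst-case high-branch liquidation vanish. With the branches verified, negative cost would contradict the absence of model-independent arbitrage, which proves \eqref{eq:sfb}.
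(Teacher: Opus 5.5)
Your proof is correct and is essentially the paper's own argument: the same portfolio (\(\theta\) shares, one short date-\(s\) call, \(\gamma\) date-\(t\) calls, no cash), the same flip from \(q=0\) to \(q=-\gamma\) at \(x_s=a\) (which branch gets the boundary point does not matter), and the same two-case bound via \(\gamma(K+\eps)=a-\eps\); your explicit check that \(\theta,\gamma>0\) usefully justifies the ask/bid choices in the cost. The only inaccuracy is in your closing heuristic, which the proof does not need: any threshold in \([a,K+2\eps]\) works, since \(\theta(x-\eps)\ge x-a\) exactly when \(x\le K+2\eps\), and the paper's operator corollary shows \(K+2\eps\) is the continuation-optimal switch, so \(a\) is sufficient but not forced.
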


\begin{proof}
At date zero, buy \(\theta\) shares, short one date-\(s\) call at \(a\),
and buy \(\gamma\) date-\(t\) calls at \(K\).  Hold no initial cash.  At
date \(s\), after observing \(x_s\), act as follows:
\begin{itemize}
    \item if \(x_s<a\), sell the \(\theta\) shares and hold \(q=0\);
    \item if \(x_s\ge a\), sell one share in total and hold
    \(q=-\gamma\).
\end{itemize}
The rule is Borel and adapted.  In the low state the earlier call expires
worthless and the sale proceeds are at least
\(\theta(x_s-\eps)\ge0\).  The later long call is nonnegative.

In the high state, the proceeds from selling one share, net of the short-call
payoff, are at least
\[
    (x_s-\eps)-(x_s-a)=a-\eps.
\]
At date \(t\), covering the short stock and receiving the long-call payoff
is worth at least
\[
    \gamma\bigl[\pos{x_t-K}-(x_t+\eps)\bigr]
    \ge-\gamma(K+\eps)=-(a-\eps).
\]
Thus final liquidation is nonnegative on every admissible path.  Its setup
cost is the left side of \eqref{eq:sfb}, which cannot be negative in the
absence of model-independent arbitrage.
\end{proof}

The trade flips from zero stock to a fractional short stock position
according to the earlier exercise state.  It uses one actual call at each
date and no probabilistic support restriction.

\subsection{A panel for every positive spread bound}

\begin{theorem}[Minimal counterexample for every \(\eps>0\)]
\label{thm:minimal-panel}
For every \(\eps>0\), consider the stock quote
\begin{equation}
[S_0^{\bid},S_0^{\ask}]
=\left[\frac{599}{100}\eps,\frac{601}{100}\eps\right]
\label{eq:minimal-stock-quote}
\end{equation}
and the two actual call quotes
\begin{equation}
\renewcommand{\arraystretch}{1.35}
\begin{array}{c@{\qquad}c@{\qquad}c@{\qquad}c}
\toprule
t & k & \underline r & \overline r\\
\midrule
1 & 3\eps & \frac{499}{100}\eps & \frac{501}{100}\eps\\[1mm]
2 & 5\eps & \frac{199}{100}\eps & \frac{201}{100}\eps\\
\bottomrule
\end{array}
\label{eq:minimal-panel}
\end{equation}
The panel satisfies the base-augmented executable CVB system, with strict
slack on every applicable strict face, but admits a model-independent
arbitrage of setup cost
\[
    -\frac{47}{150}\eps.
\]
Within the finite-call framework of Gerhold and G\"ul\"um, two quoted dates
and two actual call quotes are minimal for an augmented intertemporal
counterexample.
\end{theorem}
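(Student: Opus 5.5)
The plan has four parts: the arbitrage, the base system, the CVB conditions, and minimality.

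\emph{Arbitrage.} I would obtain it directly from \Cref{prop:sfb} with \(s=1\), \(t=2\), \(a=3\eps\), \(K=5\eps\). The hypothesis \(a<K+2\eps\) holds, and \eqref{eq:sfb-coefficients} gives \(\gamma=2\eps/6\eps=1/3\) and \(\theta=2/3\). This is exactly the strategy in \Cref{thm:main}(iii). The proof of \Cref{prop:sfb} already shows that its liquidation is nonnegative on every admissible path. Its setup cost is the left side of \eqref{eq:sfb}:
\[
\tfrac23\cdot\tfrac{601}{100}\eps-\tfrac{499}{100}\eps+\tfrac13\cdot\tfrac{201}{100}\eps
=\tfrac{1202-1497+201}{300}\eps=-\tfrac{47}{150}\eps .
\]

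\emph{Base system.} The initial width is \(\tfrac{2}{100}\eps\le\eps\). For each date I would exhibit finitely supported witnesses for the one-maturity version of \eqref{eq:measure-consistency}.

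At date two, take \(\mu_2=\delta_{7\eps}\) and \(\nu_2=\delta_{6\eps}\). These give call value \(2\eps\in[\tfrac{199}{100}\eps,\tfrac{201}{100}\eps]\), a shadow mean \(6\eps\) inside the stock quote, and \(W^\infty=\eps\).

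At date one, I would use a two-point reference law with mass \(p\) at \(\eps\) and mass \(1-p\) at \(y\), with the high atom shifted down by \(\eps\) in the shadow law. I would then tune \(p\), \(y\) and the low shadow atom so that two conditions hold: the call value \((1-p)(y-3\eps)\) lies in \([\tfrac{499}{100}\eps,\tfrac{501}{100}\eps]\), and the shadow mean lies in \([\tfrac{599}{100}\eps,\tfrac{601}{100}\eps]\).

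A quick count shows this is tight. The high atom must carry most of the call value, so the low shadow atom has to sit close to \(0\); for example \(p=1/2\) and \(y\approx13\eps\). So I would check carefully that the one-maturity condition of the source allows shadow mass that low, namely within \(\eps\) of the reference atom \(\eps\) and positive. If it does not, I would move to a three-point law and adjust.

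\emph{CVB conditions.} I would use \Cref{lem:call-geometry}. By \eqref{eq:auxiliary-quote}--\eqref{eq:terminal-ask-nodes}, the CVB points are:
\begin{itemize}
\item \((z,b)=(0,3.99\eps)\) and \((2\eps,1.99\eps)\) for \(j=0\);
\item \((2\eps,4.99\eps)\) and \((4\eps,2.99\eps)\) for \(j=1\).
\end{itemize}
The terminal ask nodes are \((2\eps,6.01\eps)\) and \((6\eps,2.01\eps)\).

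As certificate I would take the piecewise linear function
\[
U(z)=
\begin{cases}
7\eps-z, & z\le 2\eps,\\
5\eps-\tfrac34(z-2\eps), & 2\eps\le z\le6\eps,\\
2\eps, & z\ge6\eps.
\end{cases}
\]
Its slopes are \(-1\), then \(-3/4\), then \(0\). They increase, so \(U\) is convex, and they lie in \([-1,0]\), so \(U\) is nonincreasing with slope bounded below by \(-1\). Also \(U\ge 2\eps>0\).

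Then \(U(0)=7\eps\), \(U(2\eps)=5\eps\) and \(U(4\eps)=3.5\eps\). Each exceeds the corresponding bid strictly. At the nodes, \(U(2\eps)=5\eps\le6.01\eps\) and \(U(6\eps)=2\eps\le2.01\eps\). \Cref{lem:call-geometry} then gives \eqref{eq:cvb-chord}--\eqref{eq:cvb-strict}, with strictness on every face, since both terminal asks are positive.

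\emph{Minimality.} This is the main obstacle. I would split into cases.
\begin{itemize}
\item With one quoted date, the base system is already the complete one-maturity characterization, so no counterexample exists.
\item With two dates but calls quoted at only one of them, I would build a consistent model. Take the complete one-maturity witness at the quoted date. At the unquoted date, reuse the same laws, or equivalently the point mass at the shadow mean, which is dominated in convex order. This respects \(\nu_1\le_c\nu_2\), the mean constraint and the support condition, so the base system alone suffices.
\end{itemize}
Hence any counterexample needs at least two dates, with an actual call at each date, and therefore at least two calls split one per date.

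The delicate point is the case where the quoted date is date one and date two is unquoted. There I would take \(\nu_2=\nu_1\) and \(\mu_2=\mu_1\), and I must confirm that the source's framework imposes nothing further on an unquoted date.
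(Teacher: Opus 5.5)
Your proposal is correct and follows essentially the paper's route: the stock-flip strategy of \Cref{prop:sfb} with \(a=3\eps\), \(K=5\eps\); explicit one-maturity witnesses, where your date-two witness is the paper's; \Cref{lem:call-geometry} with a convex certificate, where your function with slopes \(-1,-\tfrac34,0\) works just as well as the paper's \(U(x)=\pos{8\eps-x}\); and minimality by repeating the one-date laws at the other date. Two loose ends are only cosmetic: the date-one ``tightness'' is an artifact of choosing \(p=1/2\), since shifting the high atom by \(\eps\) forces the low shadow atom to \((2-1/p)\eps\), which is comfortably positive for \(p>1/2\) (the paper uses \(p=3/4\), \(y=23\eps\), shadow atoms \(\eps/2,45\eps/2\)), and your ``point mass'' alternative fails when the unquoted date is the later one, which you already repair by taking \(\nu_2=\nu_1\), \(\mu_2=\mu_1\).
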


\begin{proof}
We prove the base conditions, exhaust the CVBs, construct the arbitrage, and
then prove minimality.

\paragraph{One-maturity consistency.}
At date one, set
\[
\mu_1=\frac34\delta_{\eps}+\frac14\delta_{23\eps},
\qquad
\nu_1=\frac34\delta_{\eps/2}+\frac14\delta_{45\eps/2}.
\]
Then
\[
E\mu_1=\frac{13}{2}\eps,\qquad
R_{\mu_1}(3\eps)=5\eps,\qquad
E\nu_1=6\eps,\qquad
W^\infty(\mu_1,\nu_1)=\frac{\eps}{2}.
\]
At date two, set
\[
    \mu_2=\delta_{7\eps},\qquad \nu_2=\delta_{6\eps}.
\]
Then
\[
R_{\mu_2}(5\eps)=2\eps,\qquad
E\nu_2=6\eps,\qquad
W^\infty(\mu_2,\nu_2)=\eps.
\]
Both reference laws are supported on \([\eps,\infty)\), their call values
lie strictly inside the quote intervals, and both shadow means equal
\(6\eps\), which lies strictly inside \eqref{eq:minimal-stock-quote}.
Thus each maturity is separately consistent with the same initial interval.
The initial stock width is \(\eps/50<\eps\).

\paragraph{CVB conditions.}
The auxiliary data \eqref{eq:auxiliary-quote} are
\[
k_{t,0}=\eps,\qquad
\underline r_{t,0}=\frac{399}{100}\eps,\qquad
\overline r_{t,0}=\frac{601}{100}\eps.
\]
There are exactly four date-one CVBs:
\begin{equation}
\renewcommand{\arraystretch}{1.50}
\begin{array}{c@{\quad}c@{\quad}c@{\quad}c}
\toprule
j&\sigma&z_{j,\sigma}&b_{j,\sigma}\\
\midrule
0&-1&2\eps&\frac{199}{100}\eps\\
0& 1&0&\frac{399}{100}\eps\\
1&-1&4\eps&\frac{299}{100}\eps\\
1& 1&2\eps&\frac{499}{100}\eps\\
\bottomrule
\end{array}
\label{eq:four-cvbs}
\end{equation}
The shifted terminal ask nodes are
\[
\left(2\eps,\frac{601}{100}\eps\right),
\qquad
\left(6\eps,\frac{201}{100}\eps\right).
\]
Now take
\[
    U(x)=\pos{8\eps-x}.
\]
This is a nonnegative, nonincreasing convex call function with slopes in
\([-1,0]\).  It lies strictly below both terminal asks and strictly above
all four CVB points.  \Cref{lem:call-geometry} proves every executable CVB
condition.  The corresponding finite slacks are
\begin{equation}
\renewcommand{\arraystretch}{1.35}
\begin{array}{c@{\qquad}c@{\qquad}l}
\toprule
\text{condition}&\text{number of instances}&\text{positive slack multiset}\\
\midrule
\eqref{eq:cvb-chord}&1&
\left\{\frac{51}{50}\eps\right\}\\[1mm]
\eqref{eq:cvb-slope}&5&
\left\{\frac{51}{50}\eps\right\}^{\times2},
\left\{\frac{201}{50}\eps\right\}^{\times3}\\[1mm]
\eqref{eq:cvb-monotone}&3&
\left\{\frac{51}{50},\frac{151}{50},\frac{201}{50}\right\}\eps\\[1mm]
\eqref{eq:cvb-strict}&1&
\left\{\frac{151}{50}\eps\right\}\\
\bottomrule
\end{array}
\label{eq:minimal-cvb-slacks}
\end{equation}

\paragraph{Arbitrage.}
Apply \Cref{prop:sfb} with \(a=3\eps\), \(K=5\eps\).  Then
\(\theta=2/3\) and \(\gamma=1/3\).  The pathwise nonnegative strategy is
\[
    \frac23\text{ share}
    -C_1(3\eps)+\frac13C_2(5\eps).
\]
Its setup cost is
\[
\begin{aligned}
\frac23\frac{601}{100}\eps
-\frac{499}{100}\eps
+\frac13\frac{201}{100}\eps
=-\frac{47}{150}\eps.
\end{aligned}
\]
It therefore receives \(47\eps/150\) initially and liquidates
nonnegatively on every path.

\paragraph{Minimality.}
One quoted date cannot yield an augmented counterexample because the source's
one-maturity theorem is a complete characterization.  If all actual calls
occupy a single quoted date, choose a one-maturity consistent reference/shadow
pair for that date and repeat the same laws at every other date, using
identity couplings for the shadow martingale.
If there are no actual calls, choose
\(m\in[S_0^{\bid},S_0^{\ask}]\) and set
\[
    \nu_1=\nu_2=\delta_m,\qquad
    \mu_1=\mu_2=\delta_{\max\{m,\eps\}}.
\]
The reference laws are supported on \([\eps,\infty)\), the shadow laws have
common mean \(m\), and the evident coupling gives
\(W^\infty(\mu_t,\nu_t)\le\eps\).  Thus every panel with fewer than two
actual calls is intertemporally consistent whenever its base conditions
hold.  A genuinely intertemporal obstruction therefore needs at least two
dates and two actual calls.  The panel \eqref{eq:minimal-panel} attains both
lower bounds, with one call at each date.
\end{proof}

\begin{remark}[Scope of minimality]
\label{rem:minimality-scope}
The minimality in \Cref{thm:minimal-panel} concerns the number of quoted
dates and actual call quotes in the finite-call framework of Gerhold and
G\"ul\"um.  It does not assert uniqueness, minimal integer coefficients,
extreme-ray minimality, or a smallest model support.
\end{remark}

\subsection{A full-dimensional open counterexample family}

\begin{corollary}[Open quote box]
\label{cor:open-box}
Fix \(\eps>0\) and the strikes \(3\eps\) and \(5\eps\).  Perturb each of the
six bid and ask coordinates in
\eqref{eq:minimal-stock-quote}--\eqref{eq:minimal-panel} independently by
less than \(\eps/200\).  Every resulting quote panel remains a
base-augmented executable-CVB counterexample with stock-flip arbitrage.
\end{corollary}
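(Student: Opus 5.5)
The plan is to rerun the three parts of the proof of \Cref{thm:minimal-panel} with every quote perturbed, and to check that the unperturbed slack exceeds the worst-case drift in each part. Write the perturbed quotes as \(S_0^{\bid}=\frac{599}{100}\eps+\delta_1\), \(S_0^{\ask}=\frac{601}{100}\eps+\delta_2\), \(\underline r_1=\frac{499}{100}\eps+\delta_3\), \(\overline r_1=\frac{501}{100}\eps+\delta_4\), \(\underline r_2=\frac{199}{100}\eps+\delta_5\), \(\overline r_2=\frac{201}{100}\eps+\delta_6\), with \(|\delta_i|<\eps/200\). Each perturbed interval is still nondegenerate, since the width \(\eps/50\) shrinks by less than \(\eps/100\).

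\textbf{Base system.} I keep the same witnesses \(\mu_1,\nu_1,\mu_2,\nu_2\) and check the following.
\begin{itemize}
\item The call values \(5\eps\) and \(2\eps\) lie at distance \(\eps/100\) from each endpoint, so they stay inside the perturbed intervals.
\item The shadow mean \(6\eps\) lies at distance \(\eps/100\) from the stock endpoints, so it stays inside the perturbed stock quote.
\item The initial width stays below \(\eps/50+\eps/100<\eps\).
\end{itemize}
Hence the base system holds unchanged.

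\textbf{CVB conditions.} I reuse the certificate \(U(x)=\pos{8\eps-x}\) from \Cref{lem:call-geometry}.
\begin{itemize}
\item The strikes are fixed, so the shifted nodes \(z_{j,\sigma}\) and \(q_i\) do not move. Only the ordinates move, each by less than \(\eps/200\).
\item The auxiliary bid \(S_0^{\bid}-2\eps\) and ask \(S_0^{\ask}\) inherit the stock perturbations.
\item The CVB bids are \(\frac{199}{100}\eps,\frac{399}{100}\eps,\frac{299}{100}\eps,\frac{499}{100}\eps\) at \(z=2\eps,0,4\eps,2\eps\). The corresponding values of \(U\) are \(6\eps,8\eps,4\eps,6\eps\), so each margin is at least \(\frac{101}{100}\eps\).
\item The terminal asks are \(\frac{601}{100}\eps\) and \(\frac{201}{100}\eps\) at \(q=2\eps,6\eps\), where \(U=6\eps,2\eps\). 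The margins are \(\eps/100\), which is larger than \(\eps/200\).
\end{itemize}
So the strict inequalities \(b<U(z)\) and \(U(q_i)<a_i\) persist, and \Cref{lem:call-geometry} gives every condition \eqref{eq:cvb-chord}--\eqref{eq:cvb-strict}. Strictness on the applicable faces comes from the lemma, because the perturbed terminal asks remain positive.

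\textbf{Arbitrage.} The strategy of \Cref{prop:sfb} with \(a=3\eps\), \(K=5\eps\) has quote-independent pathwise nonnegativity, since only the strikes enter its liquidation argument. Its setup cost is
\[
-\tfrac{47}{150}\eps+\tfrac23\delta_2-\delta_3+\tfrac13\delta_6 .
\]
The perturbation term is smaller in absolute value than \((\tfrac23+1+\tfrac13)\eps/200=\eps/100\). Since \(\frac{47}{150}\eps>\frac{1}{100}\eps\), the setup cost stays negative.

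No step is a serious obstacle. The only care needed is the tightest margin, \(\eps/100\) at the terminal and base endpoints, and the auxiliary quotes, which move with the stock quotes.
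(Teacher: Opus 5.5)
Your proposal is correct. The base-system step and the arbitrage step are the same as the paper's: you reuse the witnesses, keep the values \(5\eps\), \(2\eps\), \(6\eps\) strictly inside the perturbed intervals, note that \Cref{prop:sfb} gives quote-independent nonnegativity, and bound the drift of the setup cost by \((\theta+1+\gamma)\eps/200=\eps/100\).

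The CVB step takes a different route. The paper's proof of the corollary works directly with the slack table \eqref{eq:minimal-cvb-slacks}. Each slack is an ask, or a convex combination of two asks, minus a bid, so it moves by less than \(\eps/100\), while the smallest original slack is \(51\eps/50\). You instead keep the quote-independent certificate \(U(x)=\pos{8\eps-x}\) and recheck the two hypotheses of \Cref{lem:call-geometry} with perturbed ordinates.

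Your route does not need the enumeration of faces in the slack table. It also gives every strict face at once, since \(b<U(z)\le U(q_i)\le a_i\) whenever \(z\ge q_i\). Its cost is that this particular certificate is tight: \(U\) sits only \(\eps/100\) below both terminal asks, so it would fail for perturbations of order \(\eps/100\). The CVB inequalities themselves have room of order \(\eps\). The paper's slack argument makes that visible, and shows that the binding constraints on the box are the base-system ones, not the CVB system. For the box of half-width \(\eps/200\) as stated, both arguments suffice.
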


\begin{proof}
Every bid--ask width remains between \(\eps/100\) and \(3\eps/100\);
in particular, the initial width remains below \(\eps\).  The call values
\(5\eps\) and \(2\eps\), and the common shadow mean \(6\eps\), remain inside
their respective perturbed quote intervals, so the one-maturity witnesses in
the proof of \Cref{thm:minimal-panel} still apply.

The slack in each non-strict CVB inequality changes by less than
\(\eps/100\): its variable part is one bid coordinate subtracted from one
ask coordinate or a convex combination of two ask coordinates.  The
smallest original slack is \(51\eps/50\), so all non-strict and applicable
strict conditions remain valid.  Finally,
the stock-flip setup cost can increase by at most
\[
    (\theta+1+\gamma)\frac{\eps}{200}
    =\frac{\eps}{100},
\]
whereas its original negative margin is \(47\eps/150\).  The arbitrage
therefore survives throughout the open box.
\end{proof}

\begin{remark}[Status of Conjecture~5.4]
\label{rem:conjecture-status}
Once bid execution is made explicit, the source statement separates into
three claims.  With the CVB bid printed in
\citet{GerholdGulum2020}, even necessity fails by
\Cref{ex:printed-sign}.  With the executable bid
\eqref{eq:executable-cvb-bid}, the corrected inequalities retain their
necessary interpretation, but \Cref{thm:minimal-panel} gives a negative answer
to the sufficiency direction of Conjecture~5.4 even after the base system is
imposed.  This failure holds for every \(\eps>0\) and at minimal date and
actual-call complexity.  The conjecture's separate weak-arbitrage assertion
for panels satisfying the first three CVB families but violating the strict
fourth condition is not addressed here.
\end{remark}

\section{A matching extremal-envelope certificate}
\label{sec:envelope}

The stock-flip strategy proves a pathwise separation.  The same inequality
also follows from the reference/shadow measure geometry, and the two routes
produce the same numerical gap.

\begin{proposition}[Envelope form of the stock-flip bridge]
\label{prop:envelope-sfb}
Under the assumptions of \Cref{prop:sfb}, any \(\eps\)-consistent model must
satisfy \eqref{eq:sfb}.
\end{proposition}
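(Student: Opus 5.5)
We prove \eqref{eq:sfb} directly from the measure characterization \eqref{eq:measure-consistency}, without any trading strategy. Fix an \(\eps\)-consistent model with reference laws \(\mu_s,\mu_t\), shadow laws \(\nu_s\le_c\nu_t\), and \(E\nu_t\in[S_0^{\bid},S_0^{\ask}]\). Quote feasibility gives \(R_{\mu_s}(a)\ge\underline r_s(a)\) and \(R_{\mu_t}(K)\le\overline r_t(K)\). It therefore suffices to prove the model-level inequality
\[
\theta\,m-R_{\mu_s}(a)+\gamma R_{\mu_t}(K)\ge0,\qquad m=E\nu_s=E\nu_t ,
\]
and then use \(m\le S_0^{\ask}\) together with \(\theta\ge0\), which holds because \(a<K+2\eps\).

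\textbf{Step 1: envelopes.} We translate reference call functions into shadow call functions. Since \(W^\infty(\mu,\nu)\le\eps\), there is a coupling with \(|X-Y|\le\eps\). This gives
\[
R_{\mu_s}(a)\le R_{\nu_s}(a-\eps),\qquad R_{\mu_t}(K)\ge R_{\nu_t}(K+\eps).
\]
These are the extremal call-function envelopes of \citet{GerholdGulum2019}. Convex order then yields \(R_{\nu_t}\ge R_{\nu_s}\) pointwise.

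\textbf{Step 2: one-law pathwise bound.} Combining Step 1, it is enough to show
\[
\theta m-R_{\nu_s}(a-\eps)+\gamma R_{\nu_t}(K+\eps)\ge0.
\]
Plain convex order loses the shift of \(2\eps\) between the two strikes, so I will instead test against a martingale-type convex function. Write
\[
\theta y-(y-(a-\eps))^+=\min\{\theta y,\;a-\eps-\gamma y\}.
\]
Because the lower support bound is \(\mu\ge\eps\), the shadow law satisfies \(\nu_s\ge0\), so \(\theta y\ge0\) on the support. Hence this expression is at least \(\min\{0,\,a-\eps-\gamma y\}=-\gamma\,(y-(K+\eps))^+\), using \(\gamma(K+\eps)=a-\eps\). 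Taking expectations under \(\nu_s\) gives
\[
\theta m-R_{\nu_s}(a-\eps)\ge-\gamma R_{\nu_s}(K+\eps)\ge-\gamma R_{\nu_t}(K+\eps),
\]
where the last inequality is convex order. This is exactly the required bound.

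\textbf{Main obstacle.} The delicate point is securing nonnegativity of the shadow support, which is what lets the \(\min\) collapse. From \(\supp\mu\subseteq[\eps,\infty)\) and \(W^\infty\le\eps\) we get \(\supp\nu\subseteq[0,\infty)\), and \(\eps\) is precisely what this needs. I would also check the numerical claim: the envelope gap \(47\eps/50\) should arise as the slack in this chain at the panel of \Cref{thm:minimal-panel}, which is consistent with the \(47/150\) after multiplication by \(\gamma=1/3\).
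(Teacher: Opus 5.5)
Your proof is correct and is essentially the paper's argument. It uses the same three-knot chord inequality with weights $\theta,\gamma$, convex order at the aligned shadow strike $K+\eps$, and the $W^\infty$ call-function envelopes. The only difference is where the chord is applied: you transfer the strike $a$ to the shadow law first and apply the chord on $\nu_s$ with knots $0,a-\eps,K+\eps$, using $\supp\nu_s\subseteq[0,\infty)$ and $E\nu_s=m$; the paper applies it on $\mu_s$ with knots $\eps,a,K+2\eps$, uses $E\mu_s-\eps\le m$, and then transfers via $R_{\nu_s}(K+\eps)\ge R_{\mu_s}(K+2\eps)$.
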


\begin{proof}
Put
\[
    L=K+2\eps.
\]
The coefficients in \eqref{eq:sfb-coefficients} can be written as
\[
    \theta=\frac{L-a}{L-\eps},\qquad
    \gamma=\frac{a-\eps}{L-\eps},\qquad
    \theta+\gamma=1.
\]
For every \(y\ge\eps\), the chord inequality
\begin{equation}
\pos{y-a}
\le\theta\pos{y-\eps}+\gamma\pos{y-L}
\label{eq:chord-inequality}
\end{equation}
holds.  The difference between the right- and left-hand sides is
nonnegative on \([\eps,a]\), decreases linearly to zero on \([a,L]\), and
vanishes for \(y\ge L\).

Suppose a consistent model exists, with reference laws
\(\mu_s,\mu_t\), shadow laws \(\nu_s\le_c\nu_t\), and common shadow mean
\(m\).  Taking \(\mu_s\)-expectations in
\eqref{eq:chord-inequality}, using
\(\supp\mu_s\subseteq[\eps,\infty)\), yields
\begin{equation}
R_{\mu_s}(L)
\ge
\frac{R_{\mu_s}(a)-\theta(E\mu_s-\eps)}{\gamma}
\ge
\frac{\underline r_s(a)-\theta S_0^{\ask}}{\gamma}.
\label{eq:earlier-envelope-lower}
\end{equation}
The second inequality uses
\(|E\mu_s-m|\le\eps\) and \(m\le S_0^{\ask}\).

At the aligned shadow strike
\[
    x=K+\eps=L-\eps,
\]
the lower and upper extremal call functions in
\citet[Proposition~3.2]{GerholdGulum2019} imply
\begin{align}
R_{\nu_s}(x)
&\ge R_{\mu_s}(x+\eps)=R_{\mu_s}(L),
\label{eq:shadow-lower}\\
R_{\nu_t}(x)
&\le R_{\mu_t}(x-\eps)
=R_{\mu_t}(K)
\le\overline r_t(K).
\label{eq:shadow-upper}
\end{align}
Convex order requires \(R_{\nu_s}(x)\le R_{\nu_t}(x)\).
Combining \eqref{eq:earlier-envelope-lower}--\eqref{eq:shadow-upper} gives
\[
    \frac{\underline r_s(a)-\theta S_0^{\ask}}{\gamma}
    \le\overline r_t(K),
\]
which is \eqref{eq:sfb}.
\end{proof}

\begin{corollary}[Matching gap]
\label{cor:matching-gap}
For the panel in \Cref{thm:minimal-panel}, the extremal envelopes force
\[
    R_{\nu_1}(6\eps)\ge\frac{59}{20}\eps,
    \qquad
    R_{\nu_2}(6\eps)\le\frac{201}{100}\eps.
\]
Their gap is \(47\eps/50\).  Multiplying by the terminal call weight
\(\gamma=1/3\) gives \(47\eps/150\), the magnitude of the negative
setup cost.
\end{corollary}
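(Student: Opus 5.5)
We specialize the chain of inequalities in the proof of \Cref{prop:envelope-sfb} to the panel of \Cref{thm:minimal-panel}. There \(s=1\), \(t=2\), \(a=3\eps\), \(K=5\eps\), so \(L=K+2\eps=7\eps\), the aligned shadow strike is \(x=K+\eps=6\eps\), and \(\theta=2/3\), \(\gamma=1/3\). The two displayed bounds are the outputs of \eqref{eq:earlier-envelope-lower}--\eqref{eq:shadow-lower} and of \eqref{eq:shadow-upper}, evaluated at these numbers. The statement is conditional: it says what any \(\eps\)-consistent model would have to satisfy. Since the stock-flip arbitrage shows that no such model exists, the two bounds are incompatible, and their difference is the separation gap.

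\textbf{Lower bound.} Combining \eqref{eq:shadow-lower} with \eqref{eq:earlier-envelope-lower} gives
\[
R_{\nu_1}(6\eps)\ge R_{\mu_1}(7\eps)\ge\frac{\underline r_1(3\eps)-\theta S_0^{\ask}}{\gamma}
=3\Bigl(\frac{499}{100}-\frac23\cdot\frac{601}{100}\Bigr)\eps .
\]
The right side equals \(\frac{1497-1202}{100}\eps=\frac{295}{100}\eps=\frac{59}{20}\eps\). The step to verify is the use of \(E\mu_1\le m+\eps\le S_0^{\ask}+\eps\). It enters through the term \(-\theta(E\mu_1-\eps)\ge-\theta S_0^{\ask}\), which is exactly how it is used in \eqref{eq:earlier-envelope-lower}.

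\textbf{Upper bound and gap.} \eqref{eq:shadow-upper} gives \(R_{\nu_2}(6\eps)\le R_{\mu_2}(5\eps)\le\overline r_2(5\eps)=\frac{201}{100}\eps\). The gap is therefore
\[
\Bigl(\frac{295}{100}-\frac{201}{100}\Bigr)\eps=\frac{94}{100}\eps=\frac{47}{50}\eps .
\]
Multiplying by \(\gamma=1/3\) yields \(47\eps/150\). This agrees with the negative setup cost computed in \Cref{thm:minimal-panel}, because
\[
\gamma\Bigl(\frac{\underline r_1-\theta S_0^{\ask}}{\gamma}-\overline r_2\Bigr)
=-\bigl(\theta S_0^{\ask}-\underline r_1+\gamma\overline r_2\bigr),
\]
and the bracket on the right is exactly the left side of \eqref{eq:sfb}. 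So the match holds identically, not merely numerically.

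The only delicate point, and it is mild, is ensuring the extremal-envelope inequalities from \citet[Proposition~3.2]{GerholdGulum2019} apply at \(x=6\eps\) with shifts \(\pm\eps\). These are the same inequalities already invoked in \Cref{prop:envelope-sfb}, so nothing new is required. The remaining work is arithmetic.
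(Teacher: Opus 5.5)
Your proposal is correct and follows the paper's proof: you substitute \(a=3\eps\), \(K=5\eps\), \(\theta=2/3\), \(\gamma=1/3\) into \eqref{eq:earlier-envelope-lower}--\eqref{eq:shadow-upper} and obtain \(59\eps/20\), \(201\eps/100\) and the gap \(47\eps/50\) by the same arithmetic. Your added remark is also correct: \(\gamma\) times the gap is identically the negative of the left side of \eqref{eq:sfb}, which shows the match with the setup cost holds structurally and not merely for these numbers.
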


\begin{proof}
Substituting the panel into
\eqref{eq:earlier-envelope-lower} gives
\[
\begin{aligned}
R_{\nu_1}(6\eps)
&\ge
\frac{\frac{499}{100}\eps
-\frac23\frac{601}{100}\eps}{1/3}
=\frac{59}{20}\eps.
\end{aligned}
\]
Equation \eqref{eq:shadow-upper} gives
\[
    R_{\nu_2}(6\eps)\le\frac{201}{100}\eps.
\]
The arithmetic is
\[
    \frac{59}{20}-\frac{201}{100}
    =\frac{47}{50},
\]
and the final statement follows.
\end{proof}

\section{The exact two-date basket operator}
\label{sec:operator}

The preceding separator is one element of a larger executable basket cone.
We now eliminate the arbitrary adapted stock position.

\subsection{Convex trading costs}

The execution rule is a state-dependent convex trading cost, placing the
model within the general transaction-cost superhedging framework.

\begin{lemma}[Encoding as state-dependent convex trading costs]
\label{lem:ckt-encoding}
On the path space \eqref{eq:admissible-state}, changing the stock holding by
\(\Delta\) at state
\((\underline s,x,\overline s)=(x-\beta,x,x+\alpha)\) has cash cost
\begin{equation}
G((\underline s,x,\overline s),\Delta)
=\Delta^+\overline s-\Delta^-\underline s
=\Delta^+(x+\alpha)-\Delta^-(x-\beta)
=\max\{\Delta\underline s,\Delta\overline s\}.
\label{eq:convex-trading-cost}
\end{equation}
This is a state-dependent convex function of \(\Delta\).  Mandatory terminal
liquidation is represented by the final trade \(\Delta=-q\), and the calls
are finite static derivatives of the canonical reference coordinate.
\end{lemma}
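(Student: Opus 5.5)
The statement to prove is \Cref{lem:ckt-encoding}. The proof is a direct accounting check against the execution convention of the executable semistatic portfolios, so the plan is to verify each clause in turn. First I would parametrize an admissible state by \(\alpha=\overline s-x\ge0\) and \(\beta=x-\underline s\ge0\), which \eqref{eq:admissible-state} allows, with \(\alpha+\beta\le\eps\). The self-financing rule says a purchase \(\Delta>0\) is made at the ask \(\overline s\) and a sale \(\Delta<0\) at the bid \(\underline s\). The cash paid is therefore \(\Delta^+\overline s-\Delta^-\underline s\), and substituting \(\overline s=x+\alpha\) and \(\underline s=x-\beta\) gives the middle expression in \eqref{eq:convex-trading-cost}.

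Next I would establish the max identity by cases on the sign of \(\Delta\), using \(\underline s\le\overline s\).
\begin{itemize}
    \item For \(\Delta\ge0\), we have \(\Delta\overline s\ge\Delta\underline s\), so the maximum is \(\Delta\overline s=\Delta^+\overline s\).
    \item For \(\Delta<0\), the order reverses, so the maximum is \(\Delta\underline s=-\Delta^-\underline s\).
\end{itemize}
In both cases the maximum agrees with the cash cost. Convexity in \(\Delta\) then follows at once, since a pointwise maximum of two linear functions is convex. The function is also positively homogeneous and vanishes at \(\Delta=0\). Its dependence on the state is through \((\underline s,\overline s)\) only, and these are Borel coordinates of the path space, which gives the state-dependent measurability the framework needs.

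Finally I would check the two structural clauses. Mandatory liquidation at date two of the holding \(q\) is a trade \(\Delta=-q\) executed at the date-two state, with cost \(G(\cdot,-q)\). Its negative is exactly the liquidation value \(q^+\underline s_2-q^-\overline s_2\) used in the definition of arbitrage. The calls pay \(\pos{x_t-k_{t,i}}\), a Borel function of the reference coordinate, and they are traded only at date zero at their quoted bid or ask. This matches the finite static derivative component of \citet{CheriditoKupperTangpi2017} once the static cost \eqref{eq:setup-cost} is viewed as a convex pricing of the finite portfolio.

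No step is a real obstacle. The one point needing care is the sign convention, namely that cost is cash paid, so that liquidation proceeds enter as \(-G\). This is the same sign care that \Cref{prop:sign} addresses, and I would state it explicitly in the proof.
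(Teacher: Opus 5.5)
Your proposal is correct and follows essentially the same route as the paper: the bid--ask execution rule gives the cost formula directly, writing it as a maximum of two linear functions gives convexity and vanishing at zero, terminal liquidation is the trade \(\Delta=-q\), and each call payoff is a Borel function of the reference coordinate. The only cosmetic difference is on the static calls: the paper splits each signed position into nonnegative long and short claims executing at ask and bid, while you treat the static cost as a convex pricing of the portfolio, which amounts to the same thing.
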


\begin{proof}
The bid--ask execution rule gives \eqref{eq:convex-trading-cost} directly.
Its representation as the maximum of two affine functions proves convexity,
and it vanishes at zero.  Liquidating \(q\) shares is the terminal trade
\(-q\).
Finally, a call payoff
\(\pos{x_t-k}\) is a Borel function of the canonical state and its signed
static position can be split into nonnegative long and short quantities
executing at ask and bid.
\end{proof}

On the canonical path space, after splitting each signed static position
into nonnegative long and short claims priced at ask and bid, the model has
the same abstract convex-cost architecture as
\citet[Proposition~3.1]{CheriditoKupperTangpi2017}.  The direct calculation
below exploits the present reference/shadow geometry to obtain an explicit
scalar operator.

\subsection{Static payoffs and the variational operator}

For finite date-\(t\) call positions \(\eta_{t,i}\), set
\begin{equation}
h_t(x)=\sum_{i=1}^{N_t}\eta_{t,i}\pos{x-k_{t,i}},
\qquad t=1,2.
\label{eq:static-payoffs}
\end{equation}
Extend \(h_2\) by \(+\infty\) on \((-\infty,\eps)\).  For an initial stock
holding \(\theta\), define
\begin{equation}
\begin{aligned}
(\cW_{\eps,\theta}h_2)(x)
={}&\theta x+
\sup_{q\in\R}\inf_{y\ge\eps}
\bigl\{
h_2(y)+q(y-x)\\
&\hspace{37mm}
-\eps\bigl(|q-\theta|+|q|\bigr)
\bigr\},
\qquad x\ge\eps.
\end{aligned}
\label{eq:variational-operator}
\end{equation}

\begin{theorem}[Complete two-date executable basket cone]
\label{thm:operator}
Fix \(c,\theta,h_1,h_2\).  The following are equivalent:
\begin{enumerate}[label=\textup{(\roman*)}]
    \item there is a Borel date-one stock rule
    \(q=q(\underline s_1,x_1,\overline s_1)\) such that the associated
    self-financing portfolio has nonnegative final liquidation on every
    admissible two-date path;
    \item there is such a rule depending only on the reference coordinate,
    \(q=q(x_1)\);
    \item
    \begin{equation}
    c+h_1(x)+(\cW_{\eps,\theta}h_2)(x)\ge0
    \qquad\text{for every }x\ge\eps.
    \label{eq:backward-inequality}
    \end{equation}
\end{enumerate}
Consequently, a quote panel admits a two-date model-independent arbitrage if
and only if some finite tuple \((c,\theta,\eta)\) satisfies
\eqref{eq:backward-inequality} and has negative executable cost
\eqref{eq:setup-cost}.
\end{theorem}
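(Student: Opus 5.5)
The plan is to compute the worst case over the date-two spread explicitly, then over the date-one spread. This reduces (i) to a pointwise inequality in the date-one reference coordinate. The equivalences then follow from a measurable-selection step.

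\textbf{Wealth and date-two worst case.} Fix a Borel rule \(q\). By \Cref{lem:ckt-encoding}, the final liquidation value along a path equals
\[
c+h_1(x_1)-\max\{(q-\theta)\underline s_1,(q-\theta)\overline s_1\}
+\min\{q\underline s_2,q\overline s_2\}+h_2(x_2).
\]
Here \(q\) is evaluated at the date-one state. The setup costs of \(\theta\) and \(\eta\) are accounted for separately in \eqref{eq:setup-cost}. The admissible set \eqref{eq:admissible-state} is a product over the two dates: nothing links the date-two state to the date-one state.

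So I first fix the date-one state and \(x_2=y\ge\eps\), and minimize over the date-two spread. Subject to \(0<\underline s_2\le y\le\overline s_2\) and \(\overline s_2-\underline s_2\le\eps\), the infimum of \(\min\{q\underline s_2,q\overline s_2\}\) is \(qy-\eps|q|\). For \(q>0\) it is approached as \(\underline s_2\downarrow y-\eps\), which may equal \(0\) when \(y=\eps\). In that case the infimum is not attained, but requiring nonnegativity on all admissible paths is equivalent to nonnegativity of the infimum. Likewise, for fixed \(x_1=x\), the supremum of \(\max\{(q-\theta)\underline s_1,(q-\theta)\overline s_1\}\) is \((q-\theta)x+\eps|q-\theta|\).

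If \(q\) depends on the full state, every state with reference coordinate \(x\) must be checked separately. In each case the bound only involves the value \(q\) at that state. Collecting terms, nonnegativity at a date-one state with reference \(x\) and holding \(q\) is equivalent to
\[
c+h_1(x)+\theta x+\Phi(x,q)\ge0,\qquad
\Phi(x,q)=\inf_{y\ge\eps}\{h_2(y)+q(y-x)\}-\eps(|q-\theta|+|q|).
\]

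\textbf{Deriving the equivalences.} Suppose (i) holds. For each \(x\ge\eps\), pick any admissible state with reference \(x\) and let \(q\) be the rule's value there. This gives \(c+h_1(x)+\theta x+\Phi(x,q)\ge0\), and taking the supremum over \(q\) yields (iii). The implication (ii)\(\Rightarrow\)(i) is trivial. For (iii)\(\Rightarrow\)(ii), I need the supremum defining \(\cW_{\eps,\theta}h_2\) to be attained, together with a Borel maximizer \(x\mapsto q^*(x)\). The equivalent inequality above then holds for every date-one state.

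\textbf{Attainment and selection (main obstacle).} For fixed \(x\), the map \(q\mapsto\Phi(x,q)\) is concave: it is an infimum of affine functions minus a convex penalty. It is also upper semicontinuous. Let \(M=\sum_i|\eta_{2,i}|\). Then \(h_2\) is Lipschitz with constant \(M\) on \([\eps,\infty)\).
\begin{itemize}
\item For \(q<-M\), letting \(y\to\infty\) gives \(\Phi=-\infty\).
\item For \(q\ge0\), taking \(y=\eps\) gives \(\Phi(x,q)\le h_2(\eps)-q(x-\eps)-\eps|q|\). This tends to \(-\infty\) as \(q\to\infty\), since \(\eps|q|\) alone is coercive.
\end{itemize}
Hence the superlevel sets lie in a compact interval \([-M,Q(x)]\), with \(Q\) locally bounded, so the supremum is attained. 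Moreover \(\Phi\) is finite and continuous in \(q\) on \([-M,\infty)\), and jointly Borel (indeed upper semicontinuous) in \((x,q)\), being an infimum of continuous functions. The argmax set is therefore a nonempty compact interval. I would take \(q^*(x)\) to be its smallest element. Its Borel measurability follows from \(\{x:q^*(x)\le r\}=\{x:\sup_{q\le r}\Phi(x,q)\ge\sup_q\Phi(x,q)\}\), where suprema over rational \(q\) suffice by continuity on the relevant interval. Alternatively, I can invoke the measurable maximum theorem for Carathéodory integrands.

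\textbf{The consequence.} A model-independent arbitrage is a tuple \((c,\theta,\eta)\) with negative cost \eqref{eq:setup-cost} and some rule satisfying (i). By (i)\(\Leftrightarrow\)(iii), this is exactly the stated characterization.
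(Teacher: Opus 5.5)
Your implications \textup{(iii)}$\Rightarrow$\textup{(ii)}$\Rightarrow$\textup{(i)} are fine, including the attainment and Borel-selection argument. The gap is in \textup{(i)}$\Rightarrow$\textup{(iii)}, which rests on a false step.

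At a \emph{single} date-one state $(\underline s_1,x,\overline s_1)=(x-\beta,x,x+\alpha)$, the rebalancing cost is $(q-\theta)x+\alpha\pos{q-\theta}+\beta\negp{q-\theta}$. The term $\eps|q-\theta|$ is only the supremum of this cost over all orientations $(\alpha,\beta)\in\cD_\eps$ with $q$ held fixed. So nonnegativity at one state with holding $q$ is equivalent only to $c+h_1(x)+\theta x+F_x(\alpha,\beta,q)\ge0$, where
\[
F_x(\alpha,\beta,q)=\inf_{y\ge\eps}\{h_2(y)+q(y-x)\}-\eps|q|-\alpha\pos{q-\theta}-\beta\negp{q-\theta}.
\]
Since $F_x\ge\Phi(x,\cdot)$, this is weaker than your $\Phi$-inequality. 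Passing to the worst orientation is legitimate only when one $q$ serves every state with reference $x$, i.e.\ for reference-only rules. That is exactly what \textup{(i)}$\Rightarrow$\textup{(ii)} has to establish.

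A rule $q(\underline s_1,x,\overline s_1)$ sees the orientation before choosing, so ``pick any admissible state'' does not yield \textup{(iii)}. For example, take $h_1=h_2=0$ and $\theta=1$. The zero-spread states only force $c\ge-\eps$. But $(\cW_{\eps,1}h_2)(x)=x-\eps$, so \textup{(iii)} demands $c\ge0$.

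The missing idea is that observing the orientation does not help: for each $x$,
\[
\inf_{(\alpha,\beta)\in\cD_\eps}\sup_q F_x(\alpha,\beta,q)=\sup_q\Phi(x,q).
\]
You then evaluate the given rule at a worst orientation. The paper gets this from Sion's minimax theorem: $F_x$ is affine in $(\alpha,\beta)$ on the compact simplex and concave in $q$, with $q$ truncated uniformly in the orientation, and the inner infimum is $-\eps|q-\theta|$.

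An elementary substitute also works in your setting. Put $G=A_x-\eps|\cdot|$, so $\Phi(x,\cdot)=G-\eps|\cdot-\theta|$, and let $q^*$ be a maximizer.
\begin{itemize}
\item If $q^*>\theta$, use the state $(x,x,x+\eps)$, where $F_x=G-\eps\pos{\cdot-\theta}$.
\item If $q^*<\theta$, use the state $(x-\eps,x,x)$, where $F_x=G-\eps\negp{\cdot-\theta}$.
\end{itemize}
In either case the concave one-sided objective agrees with $\Phi(x,\cdot)$ near $q^*$, so $q^*$ is its global maximizer. If $q^*=\theta$, concavity of $G$ rules out $G>G(\theta)$ on both sides of $\theta$, so one of the two states still works.

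At that state, (i) gives $0\le c+h_1(x)+\theta x+F_x(\alpha,\beta,q_{\mathrm{rule}})\le c+h_1(x)+(\cW_{\eps,\theta}h_2)(x)$, which is \textup{(iii)}. At $x=\eps$, the state $(0,\eps,\eps)$ must be approached by admissible states.

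There is also a minor slip: $\Phi(x,\cdot)$ is finite only on $[-\Delta_2,\infty)$ with $\Delta_2=\sum_i\eta_{2,i}$, not on $[-M,\infty)$. This affects only the endpoint in your measurability argument.
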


\begin{proof}
We give the full reduction because the order of observation and optimization
is essential.

\paragraph{The liquidation identity.}
Write \(x=x_1\), \(y=x_2\), and
\[
\alpha=\overline s_1-x,\qquad
\beta=x-\underline s_1.
\]
Then
\begin{equation}
\alpha,\beta\ge0,\qquad \alpha+\beta\le\eps.
\label{eq:spread-simplex}
\end{equation}
If the investor rebalances from \(\theta\) to \(q\), the date-one bank
holding after the call payoff and stock trade is
\begin{equation}
\begin{aligned}
\phi^0_2(\underline s_1,x,\overline s_1)
={}&c+h_1(x)
-\pos{q-\theta}\,\overline s_1
+\negp{q-\theta}\,\underline s_1.
\end{aligned}
\label{eq:global-bank-account}
\end{equation}
At date two the remaining call and stock positions give
\begin{equation}
\begin{aligned}
L={}&c+h_1(x)+h_2(y)
-\pos{q-\theta}\,\overline s_1
+\negp{q-\theta}\,\underline s_1\\
&+q^+\underline s_2-q^-\overline s_2.
\end{aligned}
\label{eq:exact-liquidation}
\end{equation}
For fixed \(y,q\), the worst terminal spread orientation satisfies
\begin{equation}
\inf_{\underline s_2,\overline s_2}
\bigl(q^+\underline s_2-q^-\overline s_2\bigr)
=qy-\eps|q|.
\label{eq:terminal-spread-minimum}
\end{equation}
At \(y=\eps\), when \(q>0\), the displayed infimum would be attained only at
the excluded zero-bid boundary; it is approached by strictly positive bids.
A negative infimum therefore produces an admissible negative path by
approximation, while a nonnegative infimum bounds all admissible paths.

Substitution of \eqref{eq:spread-simplex} and
\eqref{eq:terminal-spread-minimum} into \eqref{eq:exact-liquidation} gives
the stock contribution
\begin{equation}
\theta x+q(y-x)-\eps|q|
-\alpha\pos{q-\theta}
-\beta\negp{q-\theta}.
\label{eq:oriented-stock-value}
\end{equation}

\paragraph{The observed-spread game.}
For fixed \(x\), let
\begin{equation}
A_x(q)=\inf_{y\ge\eps}\{h_2(y)+q(y-x)\}
\label{eq:Ax}
\end{equation}
and let
\[
\cD_\eps
=\{(\alpha,\beta)\in\R_+^2:\alpha+\beta\le\eps\}.
\]
The investor observes \((\alpha,\beta)\) before choosing \(q\).  Hence the
best guaranteed continuation value is initially
\begin{equation}
\begin{aligned}
\widetilde W(x)
=\theta x+\inf_{(\alpha,\beta)\in\cD_\eps}\sup_{q\in\R}
\bigl\{
A_x(q)-\eps|q|
-\alpha\pos{q-\theta}
-\beta\negp{q-\theta}
\bigr\}.
\end{aligned}
\label{eq:observed-spread-game}
\end{equation}

\paragraph{Uniform compact truncation.}
Let
\[
\Delta_2=\lim_{y\to\infty}\frac{h_2(y)}{y}
=\sum_{i=1}^{N_2}\eta_{2,i},
\qquad q_0=-\Delta_2.
\]
Because \(h_2\) is continuous and piecewise affine,
\begin{equation}
A_x(q)>-\infty
\quad\Longleftrightarrow\quad q\ge q_0.
\label{eq:Ax-domain}
\end{equation}
On \([q_0,\infty)\), \(A_x\) is finite, continuous, concave, and
piecewise affine.

For
\[
\begin{aligned}
F_x(\alpha,\beta,q)
={}&A_x(q)-\eps|q|
-\alpha\pos{q-\theta}
-\beta\negp{q-\theta},
\end{aligned}
\]
define the finite uniform baseline
\begin{equation}
m_x=\min_{(\alpha,\beta)\in\cD_\eps}
F_x(\alpha,\beta,q_0).
\label{eq:uniform-baseline}
\end{equation}
Since
\[
A_x(q)\le h_2(\eps)+q(\eps-x),
\]
we have, for \(q\ge0\) and every spread orientation,
\[
F_x(\alpha,\beta,q)
\le h_2(\eps)+q(\eps-x)-\eps q
\le h_2(\eps)-\eps q.
\]
Thus the explicit choice
\begin{equation}
M_x=
\max\left\{
q_0,\ 0,\
\frac{h_2(\eps)-m_x+1}{\eps}
\right\}
\label{eq:upper-truncation}
\end{equation}
has the property that \(q>M_x\) is strictly worse than the baseline
\(m_x\), uniformly over \(\cD_\eps\).  Every maximizer in
\eqref{eq:observed-spread-game} therefore lies in the compact interval
\([q_0,M_x]\).

On
\(\cD_\eps\times[q_0,M_x]\), \(F_x\) is continuous and affine, hence
convex, in \((\alpha,\beta)\), and continuous and concave in \(q\).  Sion's
minimax theorem gives
\begin{equation}
\inf_{(\alpha,\beta)\in\cD_\eps}\sup_q F_x(\alpha,\beta,q)
=
\sup_q\inf_{(\alpha,\beta)\in\cD_\eps}F_x(\alpha,\beta,q).
\label{eq:sion}
\end{equation}
For fixed \(q\),
\begin{equation}
\inf_{(\alpha,\beta)\in\cD_\eps}
\left\{
-\alpha\pos{q-\theta}
-\beta\negp{q-\theta}
\right\}
=-\eps|q-\theta|.
\label{eq:orientation-minimum}
\end{equation}
Equations \eqref{eq:observed-spread-game}--\eqref{eq:orientation-minimum}
prove
\[
    \widetilde W(x)=(\cW_{\eps,\theta}h_2)(x).
\]

The same truncation also closes the zero-bid boundary mentioned after
\eqref{eq:terminal-spread-minimum}.  On the compact \(q\)-interval the
objective is jointly continuous; Berge's maximum theorem makes the optimized
value continuous in \((\alpha,\beta)\).  The strictly positive admissible
stock states are dense in the closed simplex, so taking its closure changes
neither the infimum nor the pathwise nonnegativity criterion.

\paragraph{Why observing the orientation does not help.}
The supremum on the right of \eqref{eq:sion} is attained.  If \(q(x)\) is a
maximizer, then
\[
F_x(\alpha,\beta,q(x))
\ge \inf_{\cD_\eps}F_x(\cdot,\cdot,q(x))
\]
for every \((\alpha,\beta)\).  Hence one reference-only rule guarantees the
same value as a rule allowed to observe the complete spread orientation.
This proves the equivalence of \textup{(i)} and \textup{(ii)} at the level
of continuation values.

\paragraph{Borel rule and global self-financing strategy.}
\Cref{prop:finite-evaluation} below reduces the supremum to a finite set
\(\cQ(h_2,\theta)\) independent of \(x\).  Each candidate value is a
continuous affine function of \(x\); choosing the smallest maximizing
candidate gives a Borel map \(q(x)\).  Extend it to all positive histories by
\[
(\underline s_1,x,\overline s_1)
\longmapsto q(x\vee\eps),
\]
which is Borel and agrees with the desired selector on the admissible path
space.  With this rule, \eqref{eq:global-bank-account} defines the bank
holding on every positive history and specifies the complete self-financing
account.

Equations \eqref{eq:exact-liquidation}--\eqref{eq:sion} now show that a
pathwise nonnegative continuation exists if and only if
\eqref{eq:backward-inequality} holds.  Combining this with the executable
date-zero cost \eqref{eq:setup-cost} proves all statements.
\end{proof}

\subsection{Finite evaluation}

Let
\[
\mathcal Y=\{\eps\}\cup\{k_{2,i}:1\le i\le N_2\}.
\]

\begin{proposition}[Finite evaluation]
\label{prop:finite-evaluation}
For \(q\ge-\Delta_2\),
\begin{equation}
A_x(q)
=\min_{v\in\mathcal Y}\{h_2(v)+q(v-x)\}.
\label{eq:finite-A}
\end{equation}
The supremum in \eqref{eq:variational-operator} is attained on the finite set
\begin{equation}
\begin{aligned}
\cQ(h_2,\theta)
={}&\{-\Delta_2\}\\
&\cup\bigl(\{0,\theta\}\cap[-\Delta_2,\infty)\bigr)\\
&\cup
\left\{
-\frac{h_2(v)-h_2(u)}{v-u}:
u,v\in\mathcal Y,\ u<v,\
-\frac{h_2(v)-h_2(u)}{v-u}\ge-\Delta_2
\right\}.
\end{aligned}
\label{eq:candidate-set}
\end{equation}
Consequently,
\begin{equation}
\begin{aligned}
(\cW_{\eps,\theta}h_2)(x)
=\theta x+\max_{q\in\cQ(h_2,\theta)}
\Bigl[
&\min_{v\in\mathcal Y}\{h_2(v)+q(v-x)\}\\
&-\eps\bigl(|q-\theta|+|q|\bigr)
\Bigr].
\end{aligned}
\label{eq:finite-operator}
\end{equation}
\end{proposition}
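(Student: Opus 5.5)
The proof has two parts: the finite formula \eqref{eq:finite-A} for \(A_x\), and the reduction of the outer supremum to the finite candidate set \(\cQ(h_2,\theta)\).

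\textbf{Step 1: the inner infimum.} For fixed \(q\ge-\Delta_2\), the function
\[
y\mapsto h_2(y)+q(y-x)
\]
is continuous and piecewise affine on \([\eps,\infty)\). Its breakpoints lie among the strikes \(k_{2,i}>\eps\), and its slope on the last piece is \(\Delta_2+q\ge0\). An affine piece attains its minimum at an endpoint. The unbounded final piece has nonnegative slope, so its minimum is at its left endpoint. Hence the infimum over \(y\ge\eps\) is attained on \(\mathcal Y=\{\eps\}\cup\{k_{2,i}\}\), which proves \eqref{eq:finite-A}. By \eqref{eq:Ax-domain}, values \(q<-\Delta_2\) give \(A_x(q)=-\infty\) and can be discarded.

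\textbf{Step 2: structure of the outer objective.} On \([-\Delta_2,\infty)\), set
\[
\Phi_x(q)=A_x(q)-\eps(|q-\theta|+|q|).
\]
This is the minimum of finitely many affine functions \(q\mapsto h_2(v)+q(v-x)\), \(v\in\mathcal Y\), minus a convex piecewise-affine penalty with kinks only at \(0\) and \(\theta\). So \(\Phi_x\) is concave and piecewise affine.

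Its kinks lie among three kinds of points:
\begin{itemize}
\item the penalty kinks \(0\) and \(\theta\);
\item the switching points where two lines \(v\), \(u\) of the minimum cross;
\item the domain endpoint \(-\Delta_2\).
\end{itemize}
The key observation is that two lines indexed by \(u<v\) cross where
\[
h_2(v)+q(v-x)=h_2(u)+q(u-x),
\]
that is, at
\[
q=-\frac{h_2(v)-h_2(u)}{v-u}.
\]
The \(x\)-terms cancel, so this value does not depend on \(x\). Every kink of \(A_x\) is therefore one of the listed slopes, and \(\cQ(h_2,\theta)\) contains every possible kink of \(\Phi_x\) in the domain, for every \(x\).

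\textbf{Step 3: attainment of the supremum.} From the proof of \Cref{thm:operator}, \(\Phi_x(q)\le h_2(\eps)-\eps q+\text{const}\) for large \(q\), so \(\Phi_x\to-\infty\) as \(q\to\infty\). A concave piecewise-affine function on a closed half-line that tends to \(-\infty\) attains its maximum. The maximum is reached either at the endpoint \(-\Delta_2\) or at a breakpoint where the slope changes sign; on a flat maximal piece, its endpoints also attain the maximum. In every case some element of \(\cQ(h_2,\theta)\) is a maximizer. Substituting this and \eqref{eq:finite-A} into \eqref{eq:variational-operator} yields \eqref{eq:finite-operator}.

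\textbf{Main obstacle.} The delicate point is confirming that the kink set of \(A_x\) is truly independent of \(x\), including lines from \(v\in\mathcal Y\) that are never active. This is harmless: including extra candidates only enlarges the finite set, and the \(x\)-free crossing formula shows every actual kink is captured. A second check is that candidates below \(-\Delta_2\) are excluded correctly, which is handled by the intersections with \([-\Delta_2,\infty)\) in \eqref{eq:candidate-set}.
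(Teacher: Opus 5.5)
Your proof is correct and follows the paper's argument. When \(q\ge-\Delta_2\), the last affine piece of \(y\mapsto h_2(y)+q(y-x)\) is nondecreasing, and the concave piecewise-affine objective has its breakpoints among \(-\Delta_2\), the penalty kinks \(0,\theta\), and the \(x\)-free crossing slopes; your explicit bound \(\Phi_x(q)\le h_2(\eps)-\eps q\) for \(q\ge0\) also makes the attainment step self-contained, which the paper leaves implicit.
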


\begin{proof}
The function \(y\mapsto h_2(y)+q(y-x)\) is affine between the terminal
strikes.  When \(q\ge-\Delta_2\), its last affine piece is nondecreasing,
so its infimum is attained at \(\eps\) or at a strike.  This proves
\eqref{eq:finite-A}, including the flat-tail case \(q=-\Delta_2\).

The right side of \eqref{eq:variational-operator} is a concave
piecewise-affine function of \(q\).  Its breakpoints consist of the domain
endpoint \(-\Delta_2\), the penalty kinks \(0,\theta\), and intersections of
the affine pieces in \eqref{eq:finite-A}.  Every breakpoint therefore
belongs to the finite set in \eqref{eq:candidate-set}, although that set may
also contain intersections of inactive affine pieces.  A concave
piecewise-affine function attains its maximum at a breakpoint or on an
interval whose endpoints are breakpoints.
This proves \eqref{eq:finite-operator}.
\end{proof}

For rational inputs, \eqref{eq:finite-operator} is a finite rational
calculation.  It also shows directly that the operator is piecewise affine
as a function of \(x\) and that the smallest-maximizer selector used in
\Cref{thm:operator} is Borel.

\subsection{Convex-envelope representation and attainment}

\begin{proposition}[Closed-form convex-envelope operator]
\label{prop:convex-envelope}
Let \(g=h_2^{**}\) be the closed convex envelope of the extended terminal
payoff on \([\eps,\infty)\).  Then \(g\) is proper and, for every
\(x\ge\eps\),
\begin{equation}
\boxed{
(\cW_{\eps,\theta}h_2)(x)
=\theta x-\eps|\theta|
+\inf_{\substack{z\ge\eps\\|z-x|\le2\eps}}
\left\{
g(z)+\pos{\theta(z-x)}
\right\}.
}
\label{eq:convex-envelope-operator}
\end{equation}
The infimum is attained.
\end{proposition}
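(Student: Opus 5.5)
The plan is to rewrite the variational operator \eqref{eq:variational-operator} as a Fenchel dual problem and then apply Fenchel--Rockafellar duality.

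\emph{Step 1: properness of \(g\) and the conjugate form of \(A_x\).} Since \(h_2\) is continuous and piecewise affine on \([\eps,\infty)\) with finitely many slopes, it dominates an affine function there. Hence \(g=h_2^{**}\) is proper, closed and convex, with \(g\le h_2\) and \(\operatorname{dom}g\supseteq[\eps,\infty)\). A function and its closed convex envelope have the same conjugate, so \(h_2^*=g^*\). Therefore
\[
A_x(q)=\inf_{y}\{h_2(y)+qy\}-qx=-g^*(-q)-qx .
\]
This is consistent with \eqref{eq:Ax-domain}: \(g^*(-q)=+\infty\) exactly when \(q<q_0\).

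\emph{Step 2: decomposition of the penalty.} Let \(I\) be the closed interval with endpoints \(0\) and \(\theta\). A direct case check gives \(|q-\theta|+|q|=|\theta|+2\operatorname{dist}(q,I)\). Hence
\[
(\cW_{\eps,\theta}h_2)(x)-\theta x+\eps|\theta|
=\sup_{q}\bigl\{-g^*(-q)-\varphi(q)\bigr\},
\qquad
\varphi(q)=qx+2\eps\operatorname{dist}(q,I).
\]
The function \(\varphi\) is finite, convex and Lipschitz. Its conjugate is
\[
\varphi^*(z)=\sup_q\{q(z-x)-2\eps\operatorname{dist}(q,I)\}.
\]
If \(|z-x|>2\eps\), this supremum is \(+\infty\). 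If \(|z-x|\le2\eps\), moving \(q\) outside \(I\) gains at most slope \(|z-x|\) and loses slope \(2\eps\), so the supremum is attained on \(I\). It therefore equals the support function \(\max_{q\in I}q(z-x)=\pos{\theta(z-x)}\). So \(\varphi^*(z)=\pos{\theta(z-x)}+\iota_{\{|z-x|\le2\eps\}}(z)\), where \(\iota\) is the indicator function (zero on the set, \(+\infty\) off it).

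\emph{Step 3: Fenchel--Rockafellar and attainment.} The duality theorem gives
\[
\inf_z\{g(z)+\varphi^*(z)\}=\sup_q\{-g^*(-q)-\varphi^{**}(q)\},
\qquad \varphi^{**}=\varphi .
\]
Its qualification condition holds because \(x\ge\eps\) makes \((x,x+2\eps)\) lie in the interiors of both \(\operatorname{dom}g\) and \(\operatorname{dom}\varphi^*\); alternatively, \(\varphi\) is continuous everywhere. Combining with Step 2 yields \eqref{eq:convex-envelope-operator}, with the restriction \(z\ge\eps\) coming from \(\operatorname{dom}g\). Since \(g\) is extended by \(+\infty\) below \(\eps\), I will check that \(\operatorname{dom}g\) is exactly \([\eps,\infty)\), so nothing is lost by writing \(z\ge\eps\).

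For attainment, the common value is finite. It is bounded above by \(g(x)\le h_2(x)<\infty\), taking \(z=x\). It is bounded below by the value at \(q=q_0\), which is finite by \eqref{eq:Ax-domain}. The primal objective \(g+\varphi^*\) is lower semicontinuous and proper, and its domain lies in the compact interval \([x-2\eps,x+2\eps]\). It therefore attains its infimum.

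The step I expect to need the most care is Step 2: the exact identification of \(\varphi^*\) as the support function of \(I\) restricted to the \(2\eps\)-window, including the degenerate case \(\theta=0\), together with the claim \(h_2^*=g^*\) under the \(+\infty\) extension. If the general duality theorem seems heavy, a fallback is a direct minimax over \(q\in[q_0,M_x]\) and \(z\in[x-2\eps,x+2\eps]\), using Sion's theorem as in the proof of \Cref{thm:operator}, since both variables can be restricted to compact intervals.
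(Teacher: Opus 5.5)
Your proposal is correct and follows essentially the same route as the paper. Both arguments write $A_x(q)=-g^*(-q)-qx$ and compute the conjugate of the absolute-value penalty explicitly; your split $|q-\theta|+|q|=|\theta|+2\operatorname{dist}(q,I)$ is just a tidy way to reach the paper's formula for $\varphi_\theta^*$. Both then use Fenchel duality with the everywhere-finite penalty as qualification (the paper phrases this as an exact infimal convolution of $g$ with $\varphi_\theta^*$), and both get attainment from lower semicontinuity on the compact window $[\max\{\eps,x-2\eps\},x+2\eps]$.
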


\begin{proof}
The finite piecewise-affine function \(h_2\) has an affine minorant on
\([\eps,\infty)\), so its biconjugate \(g\) is a proper closed convex
function.  Conjugation does not distinguish a function from its closed
convex envelope, and
\[
    A_x(q)=-g^*(-q)-qx.
\]
Setting \(p=-q\) in \eqref{eq:variational-operator} gives
\begin{equation}
\cW_{\eps,\theta}h_2(x)
=\theta x+(g^*+\varphi_\theta)^*(x),
\qquad
\varphi_\theta(p)=\eps\bigl(|p|+|p+\theta|\bigr).
\label{eq:conjugate-sum}
\end{equation}
A direct calculation yields
\begin{equation}
\varphi_\theta^*(r)
=
\begin{cases}
\displaystyle
\max(0,-\theta r)-\eps|\theta|,
&|r|\le2\eps,\\[1mm]
+\infty,&|r|>2\eps.
\end{cases}
\label{eq:penalty-conjugate}
\end{equation}

Because \(\varphi_\theta\) is finite and continuous on \(\R\), the
sum-conjugate qualification is automatic: the infimal convolution has no
closure gap.  Substituting \eqref{eq:penalty-conjugate} into
\eqref{eq:conjugate-sum}, with \(r=x-z\), yields
\eqref{eq:convex-envelope-operator}.  The remaining domain
\[
[\max\{\eps,x-2\eps\},\,x+2\eps]
\]
is nonempty and compact.  The objective is lower semicontinuous there, so
the infimum is attained.
\end{proof}

For zero initial stock, the formula reduces to the especially simple
operation
\begin{equation}
\cW_{\eps,0}h_2(x)
=\inf_{\substack{z\ge\eps\\|z-x|\le2\eps}}h_2^{**}(z):
\label{eq:zero-stock-operator}
\end{equation}
Thus one first takes the lower convex envelope and then minimizes it over the
\(2\eps\)-neighborhood produced by the two worst spread crossings.

\subsection{Recovery of the stock flip}

\begin{corollary}[Operator explanation of the bridge]
\label{cor:operator-sfb}
For the coefficients in \eqref{eq:sfb-coefficients}, take
\[
    h_2(y)=\gamma\pos{y-K},
    \qquad h_1(x)=-\pos{x-a}.
\]
Then
\begin{equation}
\cW_{\eps,\theta}h_2(x)
=\max\{\theta(x-\eps),\,x-a\},
\qquad x\ge\eps,
\label{eq:sfb-operator-value}
\end{equation}
and \(h_1+\cW_{\eps,\theta}h_2\ge0\).
\end{corollary}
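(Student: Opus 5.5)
The plan is to evaluate the operator through the closed-form representation of \Cref{prop:convex-envelope} rather than through the finite evaluation, since here the terminal payoff is already convex.

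\textbf{Step 1: sign and convexity facts.} From \eqref{eq:sfb-coefficients}, the assumption \(a>\eps\) gives \(\gamma>0\), and \(a<K+2\eps\) gives \(\theta>0\). Hence \(h_2(y)=\gamma\pos{y-K}\) is convex and finite on \([\eps,\infty)\). Its extension by \(+\infty\) on \((-\infty,\eps)\) is therefore already closed and convex, so \(g=h_2^{**}\) coincides with \(h_2\) on \([\eps,\infty)\). Since \(\theta>0\), we also have \(|\theta|=\theta\) and \(\pos{\theta(z-x)}=\theta\pos{z-x}\).

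\textbf{Step 2: solve the inner minimization.} Substituting into \eqref{eq:convex-envelope-operator} gives
\[
\cW_{\eps,\theta}h_2(x)=\theta(x-\eps)+\inf_{\substack{z\ge\eps\\|z-x|\le2\eps}}\bigl\{\gamma\pos{z-K}+\theta\pos{z-x}\bigr\}.
\]
The objective is nondecreasing in \(z\), so the infimum is attained at the left endpoint \(z^*=\max\{\eps,x-2\eps\}\). Two cases follow.
\begin{itemize}
\item If \(\eps\le x\le3\eps\), then \(z^*=\eps\). Both positive parts vanish, because \(\eps<K\) and \(\eps\le x\), so the value is \(\theta(x-\eps)\).
\item If \(x>3\eps\), then \(z^*=x-2\eps<x\). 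The value is \(\theta(x-\eps)+\gamma\pos{x-2\eps-K}\).
\end{itemize}

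\textbf{Step 3: the key identity.} Using \(\theta+\gamma=1\) and \(\gamma(K+\eps)=a-\eps\), we get
\[
\theta(x-\eps)+\gamma(x-K-2\eps)=x-\eps-\gamma(K+\eps)=x-a.
\]
Therefore \(\theta(x-\eps)+\gamma\pos{x-K-2\eps}=\max\{\theta(x-\eps),x-a\}\), which settles the case \(x>3\eps\). In the case \(x\le3\eps\), the same identity gives
\[
x-a-\theta(x-\eps)=\gamma(x-K-2\eps)\le0,
\]
because \(x\le3\eps<K+2\eps\). So in this range the maximum also equals \(\theta(x-\eps)\). This proves \eqref{eq:sfb-operator-value}.

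\textbf{Step 4: nonnegativity.} Finally,
\[
h_1+\cW_{\eps,\theta}h_2=\max\{\theta(x-\eps),x-a\}-\pos{x-a}.
\]
This is nonnegative because both \(\theta(x-\eps)\ge0\) and \(x-a\) are dominated by the maximum, and \(\pos{x-a}\) equals one of \(0\) or \(x-a\).

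The only step needing care is the affine bookkeeping in Step 3, which matches the kink of the operator to the strike \(a\) of the date-one call. The rest follows directly from \Cref{prop:convex-envelope}.
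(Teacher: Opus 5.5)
Your proof is correct, but it takes a different route from the paper. The paper uses the finite evaluation of \Cref{prop:finite-evaluation}. With \(\mathcal Y=\{\eps,K\}\) the candidate set is \(\{-\gamma,0,\theta\}\), since the only chord slope is \(0\). The paper evaluates the bracket in \eqref{eq:finite-operator} at these three controls, obtaining \(x-a\), \(\theta(x-\eps)\) and \(0\). It then takes the maximum, appealing to the fact that the candidate set exhausts the concave breakpoints.

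You instead use the convex-envelope form of \Cref{prop:convex-envelope}. Because \(\theta,\gamma>0\), the extended payoff is already closed and convex, so the biconjugate step is trivial. The objective \(\gamma\pos{z-K}+\theta\pos{z-x}\) is then monotone in \(z\), so the infimum sits at \(z=\max\{\eps,x-2\eps\}\). The identity \(\theta(x-\eps)+\gamma(x-K-2\eps)=x-a\) finishes the argument.

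Your route needs no candidate-set enumeration or breakpoint-exhaustion argument. It also produces the closed form \(\theta(x-\eps)+\gamma\pos{x-K-2\eps}\), which shows the kink at \(x=K+2\eps\) directly. The paper's route, on the other hand, identifies the maximizing date-one holdings: \(q=0\) below \(K+2\eps\) and \(q=-\gamma\) above. That is what connects the operator value back to the stock-flip strategy of \Cref{prop:sfb} and to the remark following the corollary. Your calculation gives the continuation value, but you would have to recover the hedge separately.
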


\begin{proof}
The candidate set contains \(q\in\{-\gamma,0,\theta\}\).  Their
operator values are
\[
\begin{array}{c@{\qquad}c}
\toprule
q&\text{value}\\
\midrule
-\gamma&x-a\\
0&\theta(x-\eps)\\
\theta&0\\
\bottomrule
\end{array}
\]
The first identity uses
\(\theta+\gamma=1\) and
\(\gamma(K+\eps)=a-\eps\).  There are no larger values because
\eqref{eq:finite-operator} exhausts the concave breakpoints.  Since
\(\theta(x-\eps)\ge0\), the maximum of the three entries is
\eqref{eq:sfb-operator-value}.  Subtracting the earlier call payoff leaves a
nonnegative function.
\end{proof}

The two branches are generated by \(q=0\) and \(q=-\gamma\).  The
\(q=0\) branch is maximizing for \(x\le K+2\eps\), and the
\(q=-\gamma\) branch is maximizing for \(x\ge K+2\eps\).  The elementary
hedge in \Cref{prop:sfb} switches earlier, at \(x=a\); it remains
nonnegative but is not continuation-optimal on \([a,K+2\eps)\).

\section{Pairwise bridges and the full basket cone}
\label{sec:limits}

The minimal counterexample identifies one missing pairwise basket.  It is
natural to ask whether adding every stock-flip bridge to the CVB system would
restore sufficiency.  It does not.

\begin{proposition}[Pairwise stock-flip bridges do not close the cone]
\label{prop:pairwise-insufficient}
At \(\eps=1\), consider
\[
[S_0^{\bid},S_0^{\ask}]
=\left[\frac{499}{100},\frac{501}{100}\right]
\]
and the four actual call quotes
\begin{equation}
\renewcommand{\arraystretch}{1.55}
\begin{array}{c@{\quad}c@{\quad}c@{\quad}c}
\toprule
t&k&\underline r&\overline r\\
\midrule
1&\frac32&\frac{269}{100}&\frac{27}{10}\\[1mm]
1&\frac52&\frac{21}{10}&\frac{211}{100}\\[1mm]
2&\frac52&\frac{17}{5}&\frac{341}{100}\\[1mm]
2&\frac92&\frac{149}{100}&\frac32\\
\bottomrule
\end{array}
\label{eq:four-call-panel}
\end{equation}
The panel satisfies the base system.  Every corrected executable CVB
condition and every stock-flip bridge formed from one actual call at each
date holds strictly.  Nevertheless it admits a model-independent arbitrage
with setup cost \(-9/40\).
\end{proposition}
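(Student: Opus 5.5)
The proof splits into three certificates: consistency-type conditions, the pairwise bridges, and an explicit arbitrage. For the \textbf{base system} I will follow the proof of \Cref{thm:minimal-panel}. At each date I will choose finitely supported reference and shadow laws $\mu_t,\nu_t$ with the following properties:
\begin{itemize}
\item $\supp\mu_t\subseteq[1,\infty)$;
\item $W^\infty(\mu_t,\nu_t)\le1$;
\item shadow mean $5$, strictly inside $[4.99,5.01]$;
\item call values at the two strikes lying strictly inside the quoted intervals.
\end{itemize}
I expect two- or three-atom laws to suffice; for example, an atom at $1$ together with one or two higher atoms, with the shadow atoms shifted by at most $1$ to fix the mean. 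The atoms are tuned by solving the two call equations. The initial width $1/50$ is trivially below $\eps=1$.

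For the \textbf{executable CVB conditions} I will list all CVB points $(z_{j,\sigma},b_{j,\sigma})$ from \eqref{eq:t2-cvb-point} for $j\in\{0,1,2\}$ and $\sigma=\pm1$, and the terminal ask nodes from \eqref{eq:terminal-ask-nodes}. Here $\underline r_{1,0}=2.99$, $\overline r_{2,0}=5.01$, and the nodes are $(2,5.01)$, $(7/2,3.41)$, $(11/2,3/2)$. I will then exhibit a piecewise-affine, nonincreasing, convex $U$ with slopes in $[-1,0]$ that lies weakly below the nodes and strictly above all six CVB points, and invoke \Cref{lem:call-geometry}. A natural first attempt is $U$ interpolating slightly below the three terminal nodes and flattening to zero beyond. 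If no single $U$ separates, I will instead check \eqref{eq:cvb-chord}--\eqref{eq:cvb-strict} instance by instance and tabulate the slacks as in \eqref{eq:minimal-cvb-slacks}.

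For the \textbf{pairwise bridges} there are exactly four pairs, $a\in\{3/2,5/2\}$ and $K\in\{5/2,9/2\}$, and all satisfy $a<K+2$. For each pair I will evaluate the left side of \eqref{eq:sfb-homogeneous} with $S_0^{\ask}=5.01$ and check that it is positive. This is pure arithmetic with rational coefficients.

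The \textbf{arbitrage} is the main obstacle, because it must genuinely use more than one call per date. Otherwise it would be dominated by a bridge or a CVB. I will find it by the characterization in \Cref{thm:operator}. The procedure is:
\begin{enumerate}
\item Fix a candidate $\theta$ and call weights $\eta$, with a short date-one call at $3/2$ or $5/2$ and fractional longs in both date-two calls.
\item Compute $\cW_{1,\theta}h_2$ in closed form by \Cref{prop:finite-evaluation}, so that the candidate set $\cQ(h_2,\theta)$ is an explicit list of at most about eight slopes.
\item Check that $c+h_1+\cW_{1,\theta}h_2\ge0$ on $[1,\infty)$. Since everything is piecewise affine, it suffices to check the finitely many breakpoints and the tail slope.
\end{enumerate}
To locate the portfolio, I will first solve the finite linear program that minimizes \eqref{eq:setup-cost} subject to this piecewise-affine inequality, with $q(x)$ restricted to the finite candidate set. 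I will then read off a rational optimizer with cost $-9/40$. As a cross-check, I would reproduce the gap by an envelope argument analogous to \Cref{prop:envelope-sfb}, now combining two chord inequalities at date one.
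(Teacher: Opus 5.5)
\textbf{Verdict.} There is a genuine gap. The proposition is essentially a certificate, and its central claim, the arbitrage, is never exhibited. Moreover, the template you propose to search in provably contains no arbitrage.

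\textbf{The parts that work.} Your plans for the base system, the six CVB points and the four bridges are sound and match the paper's proof. Two small remarks:
\begin{itemize}
\item The base system only needs call values in the \emph{closed} quote intervals, and the paper's witnesses $\mu_1=\frac25\delta_1+\frac35\delta_6$ and $\mu_2=\frac15\delta_4+\frac45\delta_{51/8}$ sit on quote endpoints. In fact, a date-one law $p\delta_1+(1-p)\delta_b$ satisfies $E\mu_1=1+\tfrac12(1-p)+R_{\mu_1}(\tfrac32)$ with $1-p=R_{\mu_1}(\tfrac32)-R_{\mu_1}(\tfrac52)\le\frac35$. Hence $E\mu_1\le4$, and with shadow mean exactly $5$ your example shape is forced onto the endpoints.
\item A $U$ interpolating the terminal nodes violates the slope bound $-1$, since the chord from $(2,\frac{501}{100})$ to $(\frac72,\frac{341}{100})$ has slope $-\frac{16}{15}$. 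The paper's $U$ has slopes $-1,-\frac{19}{20},0$ and touches only the node at $\frac{11}2$.
\end{itemize}

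\textbf{Why your template cannot contain an arbitrage.} Consider the admissible path $\underline s_t=x_t=\overline s_t=5$ for $t=1,2$.
\begin{itemize}
\item On this path every date-one rebalance is value-neutral, so pathwise nonnegativity forces $c+5\theta+h_1(5)+h_2(5)\ge0$.
\item Each date-one bid lies below the payoff at $5$: $\frac{269}{100}\le\frac72$ and $\frac{21}{10}\le\frac52$.
\item Each date-two ask lies above it: $\frac{341}{100}\ge\frac52$ and $\frac32\ge\frac12$.
\item The stock leg costs $\theta^+S_0^{\ask}-\theta^-S_0^{\bid}\ge5\theta$.
\end{itemize}
Hence every portfolio that is short only at date one and long only at date two has executable cost at least $c+5\theta+h_1(5)+h_2(5)\ge0$, for any $\theta$ and $c$. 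This covers your ``short date-one call at $3/2$ or $5/2$ and fractional longs in both date-two calls''; stock-flip bridges belong to this class too.

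An arbitrage must therefore go long at date one or short at date two. You correctly say it must use more than one call per date, but your template uses only one call at date one.

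\textbf{Problems with the proposed LP.} The LP is not yet well posed.
\begin{itemize}
\item The set of pathwise-nonnegative portfolios is a cone. Minimizing cost without a normalization is therefore unbounded below once an arbitrage exists, and $-\frac9{40}$ is merely a scaling.
\item $\cQ(h_2,\theta)$ depends on the unknowns, so ``restrict $q(x)$ to the candidate set'' is not a linear constraint.
\end{itemize}

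\textbf{The certificate the paper uses.} Take $c=\frac52$ and $\theta=0$.
\begin{itemize}
\item At date one, hold a long call spread: $h_1(x)=(x-\tfrac32)^+-(x-\tfrac52)^+$.
\item At date two, hold $\frac74$ of a short call spread: $h_2(y)=-\frac74(y-\tfrac52)^++\frac74(y-\tfrac92)^+=H(y)-\frac72$, where $H=\frac74\min\{2,(\tfrac92-y)^+\}$.
\end{itemize}
The decisive step is that the closed convex envelope of $H$ on $[1,\infty)$ is $(\tfrac92-y)^+$. This is a chord anchored at the support endpoint $y=\eps=1$, where $H(1)=\frac72$.

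The zero-stock operator then gives $\cW_{1,0}h_2(x)=(\tfrac52-x)^+-\frac72$. It follows that $\frac52+h_1(x)+\cW_{1,0}h_2(x)=(\tfrac32-x)^+\ge0$ on $[1,\infty)$. The corresponding rule is to buy one share at date one if $x_1\le\frac52$ and hold none otherwise.

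The cost is $\frac52+\frac{27}{10}-\frac{21}{10}-\frac74\cdot\frac{17}{5}+\frac74\cdot\frac32=-\frac9{40}$. Without this certificate, or an equivalent verified one, the final sentence of the statement remains unproved.
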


\begin{proof}
Separate one-maturity witnesses and the CVB calculations are given in
\Cref{app:calculations}.

\begin{samepage}
The four pairwise stock-flip slacks are
\[
\renewcommand{\arraystretch}{1.25}
\begin{array}{c@{\quad}c@{\quad}c}
\toprule
a&K&\theta S_0^{\ask}-\underline r_1(a)
       +\gamma\overline r_2(K)\\
\midrule
\frac32&\frac52&\frac{366}{175}\\[1mm]
\frac32&\frac92&\frac{2201}{1100}\\[1mm]
\frac52&\frac52&\frac{1557}{700}\\[1mm]
\frac52&\frac92&\frac{537}{275}\\
\bottomrule
\end{array}
\]
so every pairwise bridge is strict.
\end{samepage}

Now take \(c=5/2\), \(\theta=0\), and
\begin{align*}
h_1(x)
&=\pos{x-\frac32}-\pos{x-\frac52},\\
h_2(y)
&=-\frac74\pos{y-\frac52}
+\frac74\pos{y-\frac92}.
\end{align*}
Write
\[
H(y)=\frac74\left[
2-\pos{y-\frac52}+\pos{y-\frac92}
\right],
\qquad
h_2=H-\frac72.
\]
The closed convex envelope of \(H\) on \([1,\infty)\) is
\(\pos{\frac92-y}\).  Indeed, this function lies below \(H\): on
\([1,5/2]\) it is bounded by \(7/2=H\), on \([5/2,9/2]\) it is
\(9/2-y\le(7/4)(9/2-y)=H(y)\), and both functions vanish thereafter.
Conversely, if \(f\) is convex and \(f\le H\), then
\(f(1)\le7/2\) and \(f(9/2)\le0\).  Convexity bounds \(f\) on
\([1,9/2]\) by the chord \(9/2-y\), while \(f\le0\) for
\(y\ge9/2\).  Therefore
\Cref{prop:convex-envelope}, with zero initial stock, gives
\[
\cW_{1,0}h_2(x)
=\pos{\frac52-x}-\frac72.
\]
For every \(x\ge1\),
\[
\frac52+h_1(x)+\cW_{1,0}h_2(x)
=
\begin{cases}
\frac32-x,&1\le x\le\frac32,\\
0,&x\ge\frac32.
\end{cases}
\]
The portfolio is pathwise nonnegative.  Its executable setup cost is
\[
\begin{aligned}
\frac52+\frac{27}{10}-\frac{21}{10}
-\frac74\!\cdot\!\frac{17}{5}
+\frac74\!\cdot\!\frac32
=-\frac9{40}.
\end{aligned}
\]
Thus no collection consisting only of the corrected CVBs and all available
pairwise stock-flip bridges can describe the full basket cone.
\end{proof}

\subsection{Model-independent arbitrage versus consistency}

\Cref{thm:operator} is a complete characterization of the
\emph{model-independent-arbitrage cone}.  It is not by itself a
characterization of the probabilistic consistency set
\eqref{eq:measure-consistency}.  Consistency always rules out
model-independent arbitrage, but the converse can fail on weak-arbitrage
boundaries.  This distinction already occurs in the source's one-maturity
theorem when its equality condition fails: under every candidate model, some
model-dependent arbitrage exists, yet there is no single pathwise strategy
that is strict on all possible models.

The strict counterexample in \Cref{thm:minimal-panel} lies away from this
boundary and is unaffected by the distinction.  A complete identification
of the operator cone with consistency on the remaining equality faces would
require an additional closure or density theorem.

\subsection{Scope}

The counterexample concerns the corrected, base-augmented executable CVB
system; the distinct weak-arbitrage clause of Conjecture~5.4 remains open.
The minimality statement counts quoted dates and actual call quotes in the
finite-call framework of Gerhold and G\"ul\"um.  The operator is complete
for two-date finite-call baskets as the static coefficients range over
\(\R\), but it does not provide a finite irredundant representation of that
cone.  Extensions to three or more dates and the frictionless limit
\(\eps=0\) are beyond the present scope.

\subsection{Conclusion}

Correcting the CVB bid does not repair the multi-maturity sufficiency
problem.  The corrected executable system fails for every positive spread
bound at the first possible intertemporal quote complexity: two dates and one
actual call at each.  The stock flip gives both a transparent executable
arbitrage and a matching convex-order separation, and the failure is stable
on an open set of quotes.

The backward operator shows why no small list of single transported
calls should be expected to capture the whole pathwise geometry.  It first
convexifies the terminal static payoff and then searches a
\(2\eps\)-neighborhood generated by the two adverse spread crossings, with
an additional one-sided term from the initial stock.  The minimal stock flip
is the smallest visible instance of that broader basket geometry.

\section{Use of generative AI}
\label{sec:aiuse}

This work used two generative AI configurations: \mbox{OpenAI} GPT-5.6 Sol
Pro through ChatGPT and \mbox{OpenAI} GPT-5.6 Sol Ultra through Codex.  The
ChatGPT configuration was used in the initial analysis of the source
formulation and in exploratory counterexample searches.  The Codex
configuration was used in developing the minimal two-call construction, its
open family, and the two-date basket operator.  Both systems also assisted
with proofs and manuscript preparation.

The author chose the research problem, directed the work, and decided which
arguments and formulations to include.

\appendix
\section{Supporting finite calculations}
\label{app:calculations}

\subsection{Base inequalities for the minimal panel}

Besides the explicit one-maturity witnesses in
\Cref{thm:minimal-panel}, the source's finite one-maturity inequalities have
the following positive slacks when the auxiliary quote is included:
\[
\renewcommand{\arraystretch}{1.40}
\begin{array}{c@{\qquad}c@{\qquad}c}
\toprule
t&\text{slope lower bound}&\text{monotonicity}\\
\midrule
1&\frac{151}{50}\eps&\frac{51}{50}\eps\\[1mm]
2&\frac{101}{50}\eps&\frac{201}{50}\eps\\
\bottomrule
\end{array}
\]
There is only one actual call at each date, hence no butterfly condition.
Since both monotonicity slacks are strictly positive, the zero-cost
call-spread case does not arise.

\subsection{Four-call panel: one-maturity witnesses}

For the panel \eqref{eq:four-call-panel}, let
\[
\mu_1=\frac25\delta_1+\frac35\delta_6.
\]
Then
\[
R_{\mu_1}\!\left(\frac32\right)=\frac{27}{10},
\qquad
R_{\mu_1}\!\left(\frac52\right)=\frac{21}{10},
\qquad E\mu_1=4.
\]
Translate \(\mu_1\) one unit to the right to obtain a shadow law
\(\nu_1\) of mean \(5\) at \(W^\infty\)-distance \(1\).

At date two, let
\[
\mu_2=\frac15\delta_4+\frac45\delta_{51/8}.
\]
Then
\[
R_{\mu_2}\!\left(\frac52\right)=\frac{17}{5},
\qquad
R_{\mu_2}\!\left(\frac92\right)=\frac32,
\qquad E\mu_2=\frac{59}{10}.
\]
Translate \(\mu_2\) by \(-9/10\) to obtain a shadow law \(\nu_2\) of mean
\(5\) at \(W^\infty\)-distance \(9/10\).  Both dates are therefore
separately consistent with the initial interval
\([499/100,501/100]\).

\subsection{Four-call panel: corrected CVB inequalities}

The auxiliary quote is
\[
k_{t,0}=1,\qquad
\underline r_{t,0}=\frac{299}{100},\qquad
\overline r_{t,0}=\frac{501}{100}.
\]
The six CVB points are
\[
\renewcommand{\arraystretch}{1.55}
\begin{array}{c@{\quad}c@{\quad}c@{\quad}c}
\toprule
j&\sigma&z&b^{\mathrm{exec}}\\
\midrule
0& 1&0&\frac{299}{100}\\
0&-1&2&\frac{99}{100}\\
1& 1&\frac12&\frac{269}{100}\\
1&-1&\frac52&\frac{69}{100}\\
2& 1&\frac32&\frac{21}{10}\\
2&-1&\frac72&\frac1{10}\\
\bottomrule
\end{array}
\]
and the shifted terminal ask nodes are
\[
\left(2,\frac{501}{100}\right),\qquad
\left(\frac72,\frac{341}{100}\right),\qquad
\left(\frac{11}{2},\frac32\right).
\]
Define
\[
U(x)=
\begin{cases}
\dfrac{69}{10}-x,
&x\le\dfrac72,\\[2mm]
\dfrac{17}{5}-\dfrac{19}{20}
\left(x-\dfrac72\right),
&\dfrac72\le x\le\dfrac{269}{38},\\[2mm]
0,&x\ge\dfrac{269}{38}.
\end{cases}
\]
Its slopes are \(-1,-19/20,0\).  Hence \(U\) is nonnegative,
nonincreasing, and convex with slope bounded below by \(-1\).  At the three
terminal nodes, the ask-minus-\(U\) gaps are
\[
    \frac{11}{100},\qquad \frac1{100},\qquad 0,
\]
while the \(U(z)-b^{\mathrm{exec}}\) gaps, in the row order above, are
\[
\frac{391}{100},\quad
\frac{391}{100},\quad
\frac{371}{100},\quad
\frac{371}{100},\quad
\frac{33}{10},\quad
\frac{33}{10}.
\]
\Cref{lem:call-geometry} proves all corrected CVB faces.

The numbers of instances of
\eqref{eq:cvb-chord}, \eqref{eq:cvb-slope},
\eqref{eq:cvb-monotone}, and \eqref{eq:cvb-strict} are, respectively,
\[
    (3,14,4,2).
\]
The positive slacks on the three chord faces and the two applicable strict
monotonicity faces are, respectively,
\[
\left(\frac{284}{75},\frac{2673}{700},\frac{596}{175}\right)
\qquad\text{and}\qquad
\left(\frac{108}{25},\frac{491}{100}\right).
\]
All remaining finite slacks are positive as certified by the displayed
function \(U\).

\begingroup
\small
\phantomsection

\endgroup

\end{document}